\tikzstyle{commdiag}=[matrix of math nodes, row sep=3em, column sep=5.5em, text height=1.5ex, text depth=0.25ex,ampersand replacement=\&]
\tikzset{>=stealth}
\title{Geometry of discrete-time spin systems}
\author[1]{Robert I.\ McLachlan\thanks{\href{mailto:r.mclachlan@massey.ac.nz}{r.mclachlan@massey.ac.nz}}}
\author[2]{Klas Modin\thanks{\href{mailto:klas.modin@chalmers.se}{klas.modin@chalmers.se}}}
\author[3]{Olivier Verdier\thanks{\href{mailto:olivier.verdier@math.umu.se}{olivier.verdier@math.umu.se}}}
\affil[1]{
	Institute of Fundamental Sciences, Massey University, New Zealand
}
\affil[2]{
	Department of Mathematical Sciences, Chalmers University of Technology and the University of Gothenburg, Sweden
}
\affil[3]{
	Department of Mathematics and Statistics, University of Umeå, Sweden
}
\date{\today}
\DeclareMathOperator{\real}{Re}
\newcommand{\ii}{\mathrm{i}}
\newcommand{\pair}[2]{\left\langle #1, #2 \right\rangle}
\providecommand{\abs}[1]{\lvert#1\rvert}
\providecommand{\vect}[1]{\boldsymbol{#1}}
\newcommand{\ud}{\mathrm{d}}
\newcommand{\pd}{\partial}
\newcommand{\CC}{{\mathbb C}}
\newcommand{\Xcal}{\mathfrak{X}}
\newcommand*\SO{\mathrm{SO}}
\newcommand*\so{\mathfrak{so}}
\newcommand*\su{\mathfrak{su}}
\newcommand*\hopf{\pi}
\newcommand\RR{\mathbf{R}}
\newcommand\inv{^{-1}}
\newcommand{\w}{\vect{w}}
\newcommand{\midp}[1]{\widetilde{#1}}
\newcommand{\wminus}{\vect{w}}
\newcommand{\wplus}{\vect{W}}
\newcommand{\wmid}{\midp{\wminus}}
\newcommand{\z}{\vect z}
\newcommand{\zminus}{\vect{z}}
\newcommand{\zplus}{\vect{Z}}
\newcommand{\zmid}{\midp{\zminus}}
\NewDocumentCommand\Rp{}{\RR_+}
\NewDocumentCommand\HH{s}{\mathbf{H}\IfBooleanTF{#1}{_*}{}}
\NewDocumentCommand\Hb{s}{{V}\IfBooleanTF{#1}{_*}{}}
\RenewDocumentCommand\CC{s}{\mathbf{C}\IfBooleanTF{#1}{_*}{}}
\NewDocumentCommand\conj{O{z}}{\overline{\vect #1}}
\NewDocumentCommand\jj{}{\mathrm{j}}
\NewDocumentCommand\kk{}{\mathrm{k}}
\NewDocumentCommand\projS{}{\rho}
\begin{document}

\maketitle

\begin{abstract}
Classical Hamiltonian spin systems are continuous dynamical systems on the symplectic phase space $(S^2)^n$.
In this paper we investigate the underlying geometry of a time discretization scheme for classical Hamiltonian spin systems called the \emph{spherical midpoint method}.
As it turns out, this method displays a range of interesting geometrical features, that yield insights and sets out general strategies for geometric time discretizations  of Hamiltonian systems on non-canonical symplectic manifolds.
In particular, our study provides two new, completely geometric proofs that the discrete-time spin systems obtained by the spherical midpoint method preserve symplecticity.

The study follows two paths.
First, we introduce an extended version of the Hopf fibration to show that the spherical midpoint method can be seen as originating from the classical midpoint method on $T^*\RR^{2n}$ for a \emph{collective Hamiltonian}.
Symplecticity is then a direct, geometric consequence.
Second, we propose a new discretization scheme on Riemannian manifolds called the \emph{Riemannian midpoint method}.
We determine its properties with respect to isometries and Riemannian submersions and, as a special case, we show that the spherical midpoint method is of this type for a non-Euclidean metric.
In combination with Kähler geometry, this provides another geometric proof of symplecticity.

\textbf{Keywords:} spin systems, Heisenberg spin chain, discrete integrable systems, symplectic integration, Moser-Veselov, Hopf fibration, collective symplectic integrators, midpoint method
 
\textbf{MSC2010:} 37M15, 65P10, 70H08, 70K99, 93C55

\end{abstract}

\newpage
\tableofcontents

\section{Introduction} 
\label{sec:introduction}

A well-known integrable PDE is the continuous classical Heisenberg equation of ferromagnetics
\begin{equation}\label{eq:spin_continuous_pde}
	\dot{\w} = \w\times \w'', \quad \w\colon S^1\to S^2,
\end{equation}
where we represent elements in $S^2$ as unit vectors in $\RR^{3}$.
Spatial discretization of this equation by
\begin{equation}\label{eq:approx}
	\w''(s)\approx \frac{\w(s-\Delta s)-2\w(s)+\w(s+\Delta s)}{\Delta s^2}
\end{equation}
leads to the classical Heisenberg spin chain
\begin{equation}\label{eq:spin_chain}
	\dot{\w}_i = \w_i\times(\w_{i-1}+\w_{i+1}), \quad \w_i\in S^2, \quad\w_{0} = \w_n.
\end{equation}
More generally, a \emph{classical Hamiltonian spin system} is of the form
\begin{equation}\label{eq:gen_spin_system}
	\dot{\w_i} = \w_i\times \frac{\pd H}{\pd \w_i}, \quad \w_i\in S^2, \quad i =1,\ldots,n,
\end{equation}
for some Hamiltonian $H\colon (S^2)^n \to \RR$.
In addition to ferromagnetics, examples include the free rigid body and the motion of $n$ point vortices on the sphere.
A key property of the flow of~\eqref{eq:gen_spin_system} is preservation of the symplectic structure of $(S^2)^n$.
In the literature on spin systems it is common to think of the Hamiltonian as a function $H\colon \RR^{3n}\to \RR$.
Notice, however, that only its restriction to $(S^2)^n$ affects the dynamics.

We are interested in symplectic, discrete-time versions of the equations~\eqref{eq:gen_spin_system}.
Towards this goal, a fruitful approach is to regard the two-sphere $S^2$ as a coadjoint orbit of the Lie--Poisson manifold $\so(3)^*$ corresponding to the Lie algebra $\so(3)$ of skew-symmetric matrices (for details on Lie--Poisson manifolds, see \cite{MaRa1999} and references therein).
Then one can use variational discretizations, as those developed by \citet{MoVe1991} for some classical integrable systems, particularly the free rigid body (see also \cite{McZa2005,McMoVeWi2013} for the extension to arbitrary Lie--Poisson manifolds and Hamiltonians).
This \emph{discrete Moser--Veselov (DMV) algorithm} is formulated as an $\SO(3)$-symmetric symplectic map on the phase space $T^*\SO(3)$.
The symmetry implies that the flow descends to a flow on $T^*\SO(3)/\SO(3)\simeq \so(3)^*$, but an explicit representation on $\so(3)^*$, without using auxiliary variables in $\SO(3)$, is not available.
Therefore, an open problem has been to find a minimal-coordinate symplectic discretization of the equations~\eqref{eq:gen_spin_system}.
A solution is provided by the \emph{spherical midpoint method}, first communicated in~\cite{McMoVe2014a}.
This method is given by the map $(\wminus_1,\ldots,\wminus_n)\mapsto(\wplus_1,\ldots,\wplus_n)$ implicitly defined by
\begin{equation}\label{eq:short_method}
	\frac{\wplus_i - \wminus_i}{\Delta t} = \frac{\wminus_i+\wplus_i}{\abs{\wminus_i+\wplus_i}}\times \frac{\pd H}{\pd \w_i}\left( \frac{\wminus_1+\wplus_1}{\abs{\wminus_1+\wplus_1}},\ldots,\frac{\wminus_n+\wplus_n}{\abs{\wminus_n+\wplus_n}} \right).
\end{equation}
A direct proof of its symplecticity is given in~\cite{McMoVe2014b}, where also several examples for specific Hamiltonians are given.
The proof in~\cite{McMoVe2014b} is via a lengthy direct calculation that is not too enlightening.
In this paper we carry out an in-depth geometric investigation of the method~\eqref{eq:short_method} and the corresponding discrete-time spin systems.

Although there has been extensive interest in Lie--Poisson integration, and in the associated discrete mechanics \cite{zhong1988lie,mclachlan1993explicit,reich1994momentum,mclachlan1995equivariant,marsden1999discrete,krysl2005explicit,vankerschaver2007euler}, all previous methods have been closely related to the classical generating functions defined on symplectic vector spaces, and all use extra variables. 
The map defined by~\eqref{eq:short_method} uses no extra variables and is in some sense the first generalization of the Poincar\'e generating function~\cite[vol.~III, \S319]{Po1892} (corresponding to the classical midpoint method) to a noncanonical, nonlinear phase space. 
Earlier discrete Lie--Poisson mechanics has also led to interesting new discrete integrable systems, including the hugely influential Moser--Veselov system~\cite{bloch1998discrete,cardoso2003moser,hairer2006preprocessed}. 
The method~\eqref{eq:short_method} applied to the free rigid body leads to an integrable mapping of an apparently new type. 
These considerations motivate our study of the geometry and discrete mechanics associated with the spherical midpoint method.

First, in \autoref{sec:proof_2}, we study the symplectic geometry.
In particular, we show that the spherical midpoint method can be interpreted as a \emph{collective symplectic integrator}, such as developed in~\cite{McMoVe2014c,McMoVe2014d}.
We establish this connection by an extension of the classical Hopf fibration.
In addition to geometric insights, the connection to collective integrators also establishes an independent geometric proof of symplecticity, completely different from the direct proof in~\cite{McMoVe2014b}.

Second, in \autoref{sec:riemann_midpoint}, we study the Riemannian geometry.
The classical midpoint method evaluates the vector field at the midpoint of a straight line joining the start and ending points. 
This suggests a generalization to Riemannian manifolds, and a \emph{Riemannian midpoint method}, that appears to be new. 
We introduce this method and establish some of its basic properties, including equivariance with respect to the isometry group of the manifold and natural behaviour with respect to Riemannian submersions.
Perhaps counterintuitively, the Riemannian midpoint method for the standard Riemannian structure on $S^2$ is not symplectic.
Nevertheless, there is a Riemannian metric for which the corresponding Riemannian midpoint method is exactly the spherical midpoint method.
We arrive at this result by examining how the Kähler structure of the space of quaternions relates to the extended Hopf map.
This provides another way to view the spherical midpoint method, and yet another proof of symplecticity, based on Kähler geometry.

We use the following notation.
$\Xcal(M)$ denotes the space of smooth vector fields on a manifold $M$.
If $M$ is a Poisson manifold, and $H\in C^{\infty}(M)$ is a smooth function on $M$, then the corresponding Hamiltonian vector field is denoted $X_H$.
Let us also recollect the concept of \emph{intertwining}.
To this extent, let $M$ and $N$ be two manifolds, and consider a differentiable map $f\colon N\to M$.
We say that $f$ \emph{intertwines} $X\in\Xcal(M)$ and $Y\in\Xcal(N)$ if $X\circ f = Tf\circ Y$.
(Some authors prefer to say that $X$ and $Y$ are \emph{$f$-related}.)
Likewise, we say that $f$ intertwines a function $\Phi \colon M \to M$ and a function $\Psi \colon N \to N$ if
\begin{equation}\label{eq:rho_related}
	\Phi\circ f = f\circ \Psi
	.
\end{equation}
Finally, the Euclidean length of a vector $\w\in\RR^{d}$ is denoted $\abs{\w}$.
If $\w\in\RR^{3n}\simeq (\RR^{3})^n$, then $\w_i$ denotes the $i$:th component in $\RR^{3}$.

We continue this section with a concise presentation of the spherical midpoint method.

\subsection{Spherical midpoint method}\label{sub:smm}

Here we review some background on the spherical midpoint method~\eqref{eq:short_method}.
All the results in this section are also available in~\cite{McMoVe2014b}.

A key point in this paper is the relation between the spherical midpoint method and the classical midpoint method on vector spaces.
We recall its definition.
\begin{definition}\label{def:classical_midpoint}
	Let $X$ be a vector field defined on an open subset of a vector space.
	The \textbf{classical midpoint method} for $X$ is the mapping $\zminus\mapsto\zplus$ defined by
	\begin{equation}\label{eq:implicit_midpoint}
		\frac{\zplus-\zminus}{\Delta t} = X\Big(\frac{\zplus+\zminus}{2}\Big),
	\end{equation}
	where $\Delta t>0$ is the time step.
\end{definition}

The vector field $X_H$ given by the right-hand side of~\eqref{eq:gen_spin_system} is defined on $(S^2)^n$ (since the Hamiltonian $H$ is defined on $(S^2)^n$).
In order to relate the spherical to the classical midpoint method we need to extend the vector field $X_H$ to $(\RR^{3}\backslash\{0\})^n$.
For this, we define a projection map $\projS$ by
\begin{equation}\label{eq:projection_map_various_r}
	\projS(\w) = \Big(\frac{\w_1}{\abs{\w_1}},\ldots,\frac{\w_n}{\abs{\w_n}}\Big).
\end{equation}
The basic observation is then that the spherical midpoint method~\eqref{eq:short_method} can be written
\begin{equation}\label{eq:classical_spherical_relation}
	\frac{\wplus-\wminus}{\Delta t} = \big( X_H\circ\projS\big)\left( \frac{\wplus+\wminus}{2} \right).
\end{equation}
Comparing with~\eqref{eq:implicit_midpoint}, we see that \emph{the spherical midpoint method is the classical midpoint method applied to $X_H\circ\projS$}.
This observation is the starting point for our developments.

What is then characteristic for vector fields of the form $\xi\circ\projS$ with $\xi\in\Xcal\big((S^2)^n\big)$?
This question leads us to the next cornerstone in the paper.

\begin{definition}\label{def:rays}
	The \textbf{ray} through a point $\w\in (\RR^{3}\backslash\{0\})^n$ is the subset
	\begin{equation}
		\{ (\lambda_1\w_1,\ldots,\lambda_n\w_n); \vect\lambda\in \RR_+^{n} \}.
	\end{equation}
\end{definition}

The set of all rays is in one-to-one relation with $(S^2)^n$.
Note that the vector field~$X=\xi\circ\projS$ is \emph{constant on rays}.
The property of being constant on rays is passed on from Hamiltonian functions to Hamiltonian vector fields.

\begin{lemma}[\cite{McMoVe2014b}]
	\label{lma:XHconstant}
	If a Hamiltonian function $H$ on $(\RR^{3}\backslash\{0\})^n$ is constant on rays, then so is its Hamiltonian vector field $X_H$, defined by
	\begin{equation}\label{eq:Ham_vf}
		X_{H}(\w) = \sum_{k=1}^{n} \w_k\times \frac{\pd H(\w)}{\pd \w_k} .
	\end{equation}
\end{lemma}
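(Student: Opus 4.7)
My plan is to reduce the ray-constancy of $X_H$ to a simple homogeneity identity for the partial gradients of $H$, obtained by differentiating the assumed invariance of $H$ itself. Concretely, write $\phi_{\vect\lambda}(\w) = (\lambda_1 \w_1, \ldots, \lambda_n \w_n)$ for $\vect\lambda \in \RR_+^n$; ray-constancy of a function $F$ on $(\RR^{3}\setminus\{0\})^n$ is exactly the statement $F\circ\phi_{\vect\lambda} = F$ for every $\vect\lambda$. So the hypothesis reads $H\circ\phi_{\vect\lambda}=H$, and the conclusion to establish is $X_H\circ\phi_{\vect\lambda} = X_H$.

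The first step is to differentiate $H(\phi_{\vect\lambda}(\w)) = H(\w)$ with respect to $\w_k$, holding $\vect\lambda$ fixed. By the chain rule this yields
\[
  \lambda_k\,\frac{\partial H}{\partial \w_k}\bigl(\phi_{\vect\lambda}(\w)\bigr) \;=\; \frac{\partial H}{\partial \w_k}(\w),
\]
so each partial gradient transforms with weight $\lambda_k^{-1}$ under the scaling action.

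The second step is to substitute into the defining formula~\eqref{eq:Ham_vf}. The scaling $\w_k\mapsto\lambda_k\w_k$ contributes a factor $\lambda_k$ to the $k$-th summand, while the gradient contributes $\lambda_k^{-1}$, and by bilinearity of the cross product these cancel term by term:
\[
  X_H\bigl(\phi_{\vect\lambda}(\w)\bigr) \;=\; \sum_k (\lambda_k\w_k)\times\lambda_k^{-1}\frac{\partial H}{\partial \w_k}(\w) \;=\; X_H(\w).
\]
Since this holds for arbitrary $\vect\lambda\in\RR_+^n$, $X_H$ is constant on rays.

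There is no real obstacle: the entire content is degree-of-homogeneity bookkeeping, turning on the fact that the scaling weights on the two factors of each cross product are inverse to each other. A slightly more conceptual variant is infinitesimal: with $E_k = \w_k\cdot\partial_{\w_k}$ the Euler field generating the $k$-th $\RR_+$-action, ray-constancy of $H$ is $E_k(H)=0$, and the same cancellation gives $[E_k, X_H]=0$, which integrates to the claim. I would present the global version, as it is the one that matches the statement verbatim and needs no appeal to flows.
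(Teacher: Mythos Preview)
Your argument is correct. The paper itself does not supply a proof of this lemma: it is stated with a citation to~\cite{McMoVe2014b} and used as a standing fact, so there is no in-paper proof to compare against. That said, your computation is exactly the standard one and is almost certainly what the cited reference does as well: differentiate the ray-invariance $H\circ\phi_{\vect\lambda}=H$ in $\w_k$ to get the homogeneity $\lambda_k\,\partial_{\w_k}H(\phi_{\vect\lambda}(\w))=\partial_{\w_k}H(\w)$, then observe that the factor $\lambda_k$ from $\w_k\mapsto\lambda_k\w_k$ cancels the factor $\lambda_k^{-1}$ on the gradient in each cross-product term. Nothing is missing; the infinitesimal variant you sketch is equivalent and not needed for the write-up.
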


The implication is that we may replace $S^2$ with the manifold of rays, and Hamiltonian functions on $(S^2)^n$ with Hamiltonian functions on $\RR^{3}\backslash\{0\}$ that are constant on rays.
In this representation, the spherical midpoint method becomes the classical midpoint method, as we have seen.

$\RR^{3n}$ and $(\RR^{3}\backslash\{0\})^n$ are Poisson manifolds with the Poisson bracket
\begin{equation}\label{eq:LiePoisson_bracket}
	\{F,G \}(\w) = \sum_{k=1}^{n}\big(
		\frac{\pd F(\w)}{\pd \w_k}\times 
		\frac{\pd G(\w)}{\pd \w_k}
	\big)\cdot \w_k.
\end{equation}
This is the canonical Lie--Poisson structure of $(\so(3)^{*})^{n}$, or $(\su(2)^{*})^{n}$, obtained by identifying $\so(3)^{*}\simeq \RR^{3}$, or $\su(2)^{*}\simeq \RR^{3}$.
For details, see \cite[\S\!~10.7]{MaRa1999} or \cite{McMoVe2014c}.

The flow of a Hamiltonian vector field $X_H$ on $\RR^{3n}$, denoted $\exp(X_{H})$, preserves the Lie--Poisson structure, i.e.,
\begin{equation}\label{eq:preservation_of_LP}
	\{F\circ\exp(X_{H}),G\circ\exp(X_{H}) \} = \{F,G \}\circ\exp(X_{H}), \quad \forall F,G\in C^{\infty}(\RR^{3n}).
\end{equation}
The flow $\exp(X_{H})$ also preserves the \emph{coadjoint orbits}~\cite[\S\!~14]{MaRa1999}, given by
\begin{equation}\label{eq:coadjoint_orbits}
	S_{\lambda_1}^{2}\times\cdots\times S_{\lambda_n}^2\subset \RR^{3n},\quad \vect \lambda = (\lambda_1,\ldots,\lambda_n)\in (\RR^{+})^{n},
\end{equation}
where $S^{2}_{\lambda}$ denotes the 2--sphere in $\RR^3$ of radius~$\lambda$.
A \emph{Lie--Poisson integrator} for $X_{H}$ is a method that, like the exact flow, preserves the Lie--Poisson structure and the coadjoint orbits.

It is possible to extend the spherical midpoint method so that it encompasses all non-singular coadjoint orbits, instead of only the one with radius one.
Define the map $\Gamma(\wminus,\wplus)$ by
\begin{equation}\label{eq:Gamma_projection}
	\Gamma\colon \big(\wminus,\wplus\big) \longmapsto
	\Big( \frac{\sqrt{\abs{\wminus_{1}}\abs{\wplus_{1}}}(\wminus_{1}+\wplus_{1})}{\abs{\wminus_{1}+\wplus_{1}} },\ldots, \frac{\sqrt{\abs{\wminus_{n}}\abs{\wplus_{n}}}(\wminus_{n}+\wplus_{n})}{\abs{\wminus_{n}+\wplus_{n}} } \Big).
\end{equation}
We then have the following definition.

\begin{definition}\label{def:extended_spherical_midpoint}
	Let $X$ be a vector field on an open subset of $\RR^{3n}$.
	The \textbf{extended spherical midpoint method} for $X$ is the discrete-time system $\wminus\mapsto\wplus$ defined by
	\begin{equation}\label{eq:area_midpoint_LP}
		\frac{\wplus-\wminus}{\Delta t} = X\big(\Gamma(\wminus,\wplus) \big).
	\end{equation}
\end{definition}

A consequence of the geometric investigations in \autoref{sec:proof_2} is that the method~\eqref{eq:area_midpoint_LP} is a Lie--Poisson integrator, directly related to the classical midpoint method on $T^*\RR^{2n}$ through the concept of \emph{collective symplectic integrators}.
A consequence of the geometric investigations in \autoref{sec:riemann_midpoint} is that the method~\eqref{eq:area_midpoint_LP} is a \emph{Riemannian midpoint method} with respect to a non-standard metric on $(\RR^{3}\backslash\{0\})^n$.

\section{Symplectic and Poisson geometry} 
\label{sec:proof_2}

In this section we show that the extended spherical midpoint method and the classical midpoint method are coupled through \emph{collective symplectic integrators}~\cite{McMoVe2014c,McMoVe2014d}.

We use quaternions as it makes the calculations more transparent; the field of quaternions is denoted $\HH$.
We apply the convention that product sets $\CC^n$ and $\HH^n$ inherit the componentwise operations of the underlying field.
For instance, if $\vect z = (z_1,z_2) \in \CC^2$, then 
\begin{align}
	\label{eq:componentwiseexample}
	\vect z^3 = (z_1^3, z_2^3)
	.
\end{align}
All operations are defined in the same manner.

\newcommand*\dc{\varpi}
\subsection{Intertwining by the double covering map}

We first consider intertwining in the double covering case.
We define the \emph{double covering map}
\begin{align}
	\label{eq:doublecovering}
	\dc \colon \CC^{n} \to \CC^{n}, \quad \vect z \mapsto \vect z^2
\end{align}
following the convention in \eqref{eq:componentwiseexample}.
We use the notation $\CC* \coloneqq \CC\backslash \{0\}$ and $\CC*^n \coloneqq (\CC*)^n$.

\begin{lemma}
	\label{lma:coveringmidpoint}
	Let $X,Y\in \Xcal(\CC*^n)$ and let $\Phi(\Delta t X)$ and $\Phi(\Delta tY)$ denote the classical midpoint method~\eqref{eq:implicit_midpoint} on~$\CC^{n}$ for $X$ and $Y$ respectively.
	Assume that:
	\begin{enumerate}
		\item $X(\vect\lambda \vect z) = X(\vect z)$ for all $\vect\lambda \in \Rp^n$, i.e., $X$ is constant on rays.
		\item $Y$ is tangent to the tori in $\CC_*^n$, i.e., $Y(\vect{z})/\vect{z}$ is imaginary for all $\vect{z}\in \CC*^n$.
		\item $\dc$ intertwines $X$ and $Y$
	\end{enumerate}
	Then $\dc$ intertwines $\Phi(\Delta t X)$ and $\Phi(\Delta t Y)$.
\end{lemma}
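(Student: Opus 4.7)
The plan is to show directly that if $\vect w \mapsto \vect W$ is a step of the classical midpoint method for $Y$, then setting $\vect z = \dc(\vect w) = \vect w^2$ and $\vect Z = \dc(\vect W) = \vect W^2$ gives a step of the classical midpoint method for $X$, i.e.,
\begin{equation*}
	\vect W^2 - \vect w^2 = \Delta t \cdot X\!\left(\frac{\vect w^2 + \vect W^2}{2}\right).
\end{equation*}
The idea is first to rewrite the left-hand side as $\Delta t \cdot X(\vect u^2)$ with $\vect u = (\vect w + \vect W)/2$ the midpoint on the covering side, and then to argue that $\vect u^2$ and $(\vect w^2 + \vect W^2)/2$ lie on the same ray, so hypothesis~(1) allows replacing the former by the latter inside $X$.

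For the first reduction, I would factor componentwise $\vect W^2 - \vect w^2 = (\vect W - \vect w)(\vect W + \vect w) = 2\vect u\cdot(\vect W - \vect w)$. Substituting the midpoint equation for $Y$ turns this into $2\Delta t\,\vect u\cdot Y(\vect u)$. Since $T\dc(\vect u)$ is componentwise multiplication by $2\vect u$, hypothesis~(3), evaluated at $\vect u$, yields $X(\vect u^2) = 2\vect u\cdot Y(\vect u)$, and therefore $\vect W^2 - \vect w^2 = \Delta t \cdot X(\vect u^2)$.

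The crux is the ray identity. By hypothesis~(2), $Y_k(\vect u) = \ii\alpha_k(\vect u)u_k$ with $\alpha_k$ real, so the midpoint equation for $Y$ reads $W_k - w_k = \ii\beta_k(W_k+w_k)$ with $\beta_k = \Delta t\,\alpha_k(\vect u)/2 \in \RR$. This forces $W_k/w_k = (1+\ii\beta_k)/(1-\ii\beta_k)$; in particular $|W_k| = |w_k|$, and $\real(W_k/w_k) = (1-\beta_k^2)/(1+\beta_k^2) > 0$ for sufficiently small $\Delta t$. Writing $W_k/w_k = \ee^{\ii\theta_k}$ with $\cos\theta_k > 0$, a direct calculation gives
\begin{equation*}
	\frac{(w_k+W_k)^2/4}{(w_k^2+W_k^2)/2} = \frac{(1+\ee^{\ii\theta_k})^2}{2(1+\ee^{2\ii\theta_k})} = \frac{1+\sec\theta_k}{2} \in \Rp,
\end{equation*}
so componentwise $\vect u^2$ is a positive-real scalar multiple of $(\vect w^2 + \vect W^2)/2$. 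Hypothesis~(1) then gives $X(\vect u^2) = X((\vect w^2 + \vect W^2)/2)$, which combined with the previous step yields the midpoint equation for $X$ and completes the proof.

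The main obstacle is this ray identity: the reduction using the intertwining hypothesis~(3) naturally produces $X$ evaluated at $\vect u^2$, which is not the midpoint of $\vect z$ and $\vect Z$, so one must genuinely exploit the torus-tangency of $Y$ to pin down the unit-modulus form of $W_k/w_k$ and combine it with the ray-invariance of $X$ to close the gap.
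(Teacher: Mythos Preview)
Your proof is correct and follows essentially the same approach as the paper: both first use the factorization $\vect W^2-\vect w^2 = 2\vect u(\vect W-\vect w)$ together with the intertwining hypothesis to reduce to $\Delta t\,X(\vect u^2)$, and then invoke ray-constancy of $X$ after verifying that $\vect u^2$ and $(\vect w^2+\vect W^2)/2$ lie on the same ray. The only minor difference is in how that last ray identity is established---the paper normalizes so that $\vect u\in\RR^n$ (by rotational equivariance) and uses the algebraic identity $\vect w^2+\vect W^2=\tfrac12\bigl((\vect W-\vect w)^2+(\vect W+\vect w)^2\bigr)$ to see both points are real, whereas you compute the ratio directly as $(1+\sec\theta_k)/2$; both arguments implicitly need $\Delta t$ small for positivity, which you make explicit.
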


\begin{proof}
	The proof is illustrated in \autoref{fig:coveringmidpoint}.
	Consider two points $\zminus$ and $\zplus$, solutions of one step of the classical midpoint method for $Y$.
	The midpoint is $\zmid \coloneqq (\zminus+\zplus)/2$, so
	\begin{align}
		\zplus - \zminus = \Delta t Y(\zmid).
	\end{align}
	The assumption  that $\dc$ intertwines $X$ and $Y$ is
	\begin{align}
		X(\vect z^2) = 2 \vect z Y(\vect z).
	\end{align}
	We have
	\begin{equation}\label{eq:midpoint_c2}
	\begin{split}
		\zplus^2 - \zminus^2 &= 2 \zmid (\zplus-\zminus) \\
		&= 2 \zmid Y(\zmid) \\
		&= \Delta t X(\zmid^2).	
	\end{split}
	\end{equation}
	Consider the general identity
	\begin{align}
		\zminus^2 + \zplus^2 = \frac{1}{2} \paren[\big]{(\zplus-\zminus)^2 + (\zminus + \zplus)^2}.
	\end{align}
	Without loss of generality, we assume that $\zmid\in\RR^n$.
	We assumed that $Y$ was tangent to circles, so $Y(\zmid)$ has only imaginary components, so the same holds for $\zplus - \zminus$, which implies that $(\zplus-\zminus)^2$ only has real components.
	We therefore obtain that $\zminus^2 + \zplus^2$ is in $\RR^n$.
	Since $X$ is constant on the rays, and since $\RR^{n}$ is a ray in $\CC_*^{n}$, we get
	\begin{align}
		X\big((\zminus^2+\zplus^2)/2\big) = X(\zmid^2).
	\end{align}
	From~\eqref{eq:midpoint_c2} we now have
	\begin{equation}\label{eq:midpoint_fulfilled_c2}
		\zplus^2 - \zminus^2 = \Delta t X\big((\zminus^2+\zplus^2)/2\big).
	\end{equation}
	This proves the result.
\end{proof}

\begin{figure}
	\centering	
	\begin{tikzpicture}[
		scale=3,
		vector/.style={draw, ->, thick, blue},
		arccircle/.style={draw, thin, gray},
		]
		\newcommand\myangle{30} 
		\newcommand\lenf{.3} 
		\newcommand\dmyangle{60} 
		\newcommand\zmsq{.75} 
		\newcommand\leng{.52} 
		\newcommand\drawpoint[1]{\fill[black] (#1) circle (.5pt);}


		\path[arccircle] (1,0) arc (0:\myangle:1) node (z1){};
		\path[arccircle] (1,0) arc (0:-\myangle:1) node (z0) {};
		\node[below] at (z0) {$\zminus$};
		\node[above] at (z1) {$\zplus$};
		\coordinate (zm) at ($(z0)!.5!(z1)$);
		\node[left] at (zm) {$\zmid$};
		\path[vector] (zm) -- +(0,\lenf);
		\drawpoint{zm}
		\drawpoint{z0}
		\drawpoint{z1}
		\begin{scope}[xshift=1.5cm]
		\path[arccircle] (1,0) arc (0:\dmyangle:1) node (sz1){};
		\path[arccircle] (1,0) arc (0:-\dmyangle:1) node (sz0) {};
		\node[below] at (sz0) {$\zminus^2$};
		\node[above] at (sz1) {$\zplus^2$};
		\coordinate (szm) at (.75,0);
		\node[below, xshift=1ex] at (szm) {$\zmid^2$};
		\path[vector] (szm) -- +(0.,\leng);
		\coordinate (szM) at ($(sz0)!.5!(sz1)$);
		\node[below left, xshift=3ex] at (szM) {$\midp{\z^2}$};
		\path[vector] (szM)  -- +(0.,\leng);
		\drawpoint{sz0}
		\drawpoint{sz1}
		\drawpoint{szm}
		\drawpoint{szM}
		\end{scope}
		\path[draw,->,thick, black] (1.3,0) -- (1.6,0) node[above, midway]{$\dc$};
	\end{tikzpicture}
	\caption{An illustration of \autoref{lma:coveringmidpoint}.
	The midpoint $\zmid$ maps to the same ray as the midpoint $\midp{\z^2}\coloneqq (\zminus^2+\zplus^2)/2$.
	}
	\label{fig:coveringmidpoint}
\end{figure}

\subsection{Intertwining by the extended Hopf map}

Consider the map
\begin{align}
	\hopf \colon \HH^n &\to \HH^n\\
	\vect z &\mapsto \frac{1}{4}\vect z \kk \conj
\end{align}
Again, we follow the convention of \eqref{eq:componentwiseexample} and all the operations are defined componentwise.
Note that the image of $\hopf$ has no real part.
Let us define the three dimensional subspace of \emph{pure imaginary quaternions} by
\begin{align}\label{eq:imaginary_quaternions}
	\Hb \coloneqq  \operatorname{span}\set{\ii, \jj, \kk}
\end{align}
If we identify $\Hb$ with $\RR^3$, we can regard $\hopf$ as a map
\begin{align}\label{eq:extended_Hopf_quaternions}
	\hopf \colon \HH^n \to \RR^{3n}.
\end{align}
When $n=1$ this is the \emph{extended Hopf map}, essential in the construction of collective Lie--Poisson integrators on $\RR^{3}$~\cite{McMoVe2014c}.

$\Hb^n$ is naturally endowed with the Lie--Poisson structure~\eqref{eq:LiePoisson_bracket}, inherited from~$\RR^{3n}$.
We use the notation $\Hb* \coloneqq \Hb\backslash\{0 \}$ and $\Hb*^n \coloneqq (\Hb*)^n$.

\begin{lemma}
	\label{lma:Hopfmidpoint}
	Let $X\in\Xcal(\Hb*^n)$ and $Y\in\Xcal(\HH*^n)$, and let $\Phi(\Delta t X)$ and $\Psi(\Delta t Y)$ denote the classical midpoint methods on $\Hb^n$ and $\HH^n$ respectively.
	Assume that:
	\begin{enumerate}
		\item $X(\vect\lambda \vect w) = X(\vect w)$ for all $\vect\lambda \in \Rp^n$ and $\vect w \in \Hb*^{n}$.
		\item $Y$ is tangent to 3-spheres, i.e., $\vect z \inv Y(\vect z) \in \Hb^n$ for all $\vect z\in\HH*^{n}$.
		\item $Y$ is orthogonal to the fibres of $\hopf$, i.e. $\kk \vect z\inv Y(\vect z) \in \Hb^n$ for all $\vect z \in \HH*^{n}$.
		\item $\hopf$ intertwines $X$ and $Y$.
	\end{enumerate}
	Then $\hopf$ intertwines $\Phi(\Delta t X)$ and $\Psi(\Delta t Y)$.
\end{lemma}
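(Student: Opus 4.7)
The plan is to follow the template of \autoref{lma:coveringmidpoint}, replacing the squaring map by the extended Hopf map $\hopf$ and doing all quaternion arithmetic componentwise. Let $\zminus,\zplus$ be one step of $\Psi(\Delta t Y)$, set $\zmid=(\zminus+\zplus)/2$ and $\zdif=\zplus-\zminus$, so that $\zdif = \Delta t\, Y(\zmid)$. The goal is to show that $\hopf(\zminus)\mapsto\hopf(\zplus)$ satisfies the classical midpoint equation for $X$, i.e.\ $\hopf(\zplus)-\hopf(\zminus)= \Delta t\, X\bigl((\hopf(\zplus)+\hopf(\zminus))/2\bigr)$.

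First I would expand $\zplus\kk\overline{\zplus}-\zminus\kk\overline{\zminus}$ with $\zplus=\zmid+\zdif/2$ and $\zminus=\zmid-\zdif/2$. The $\zmid\kk\overline{\zmid}$ and $\zdif\kk\overline{\zdif}$ contributions cancel, leaving only the cross terms, so $4(\hopf(\zplus)-\hopf(\zminus))=\zmid\kk\overline{\zdif}+\zdif\kk\overline{\zmid}$. Linearising $\hopf(\vect z)=\tfrac14\vect z\kk\conj$ identifies the right-hand side as $4\Delta t\, T\hopf_{\zmid}(Y(\zmid))$, which by the intertwining hypothesis~(4) equals $4\Delta t\, X(\hopf(\zmid))$. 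This handles the difference half: $\hopf(\zplus)-\hopf(\zminus)=\Delta t\, X(\hopf(\zmid))$.

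The less automatic half is showing that $X$ takes the same value at $\hopf(\zmid)$ and at $(\hopf(\zplus)+\hopf(\zminus))/2$. The analogous sum-expansion yields $(\hopf(\zplus)+\hopf(\zminus))/2 = \hopf(\zmid)+\tfrac{1}{16}\zdif\kk\overline{\zdif}$, so the task reduces to showing that $\zdif\kk\overline{\zdif}$ is a positive real multiple of $\hopf(\zmid)$ componentwise. Conditions~(2) and~(3) combine to say that $\zmid\inv Y(\zmid)$ is pure imaginary and $\kk$-orthogonal componentwise; hence $Y(\zmid)=\zmid v$ with $v\in\operatorname{span}\{\ii,\jj\}^n$. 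A short quaternion computation then yields $v\kk\bar v=-|v|^2\kk$, so $\zdif\kk\overline{\zdif}=-4(\Delta t)^2|v|^2\hopf(\zmid)$ componentwise, whence $(\hopf(\zplus)+\hopf(\zminus))/2=\bigl(1-(\Delta t|v|/2)^2\bigr)\hopf(\zmid)$. For $\Delta t$ in the range where $\Psi(\Delta t Y)$ is well-defined each componentwise scalar is positive, so the two points lie on the same ray, and assumption~(1) gives $X\bigl((\hopf(\zplus)+\hopf(\zminus))/2\bigr)=X(\hopf(\zmid))$, closing the argument.

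I expect the main obstacle to be the quaternion identity $v\kk\bar v=-|v|^2\kk$ together with the verification that the resulting scalar keeps $(\hopf(\zplus)+\hopf(\zminus))/2$ on the same ray as $\hopf(\zmid)$. This is exactly where both~(2) and~(3) are needed in an essential way: geometrically they encode that $Y$ is horizontal for the Riemannian submersion induced by the Hopf fibration, and this horizontality is precisely what forces the two quadratic midpoints to agree up to a positive rescaling, in perfect analogy with the role of ``$\zdif$ pure imaginary'' in \autoref{lma:coveringmidpoint}.
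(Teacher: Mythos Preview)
Your argument is correct, but it takes a genuinely different route from the paper's. The paper does not carry out the quaternion expansion directly: instead it uses the affine equivariance of the classical midpoint method to normalise $\zmid$ to be real, then uses conditions~(2) and~(3) to rotate so that $Y(\zmid)\in\ii\,\RR^n$; this forces $\zminus,\zplus$ into the complex plane $\operatorname{span}\{1,\ii\}^n$, on which $\hopf$ collapses to the double covering $\vect z\mapsto\vect z^2\kk$, and the result follows by invoking \autoref{lma:coveringmidpoint}. Your approach bypasses this reduction entirely and works directly with $\hopf$, computing the difference and sum of $\hopf(\zplus),\hopf(\zminus)$ and using the identity $v\kk\bar v=-|v|^2\kk$ for $v\in\operatorname{span}\{\ii,\jj\}$ to place the averaged image on the same ray as $\hopf(\zmid)$. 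The paper's proof is cleaner structurally---it explains why \autoref{lma:coveringmidpoint} exists as a separate lemma---while yours is more self-contained and makes explicit the small-step condition $\Delta t\,|v_i|<2$ needed for the ray scalar $1-(\Delta t\,|v_i|/2)^2$ to stay positive. One wording slip: you write that the task is to show $\zdif\kk\overline{\zdif}$ is a \emph{positive} multiple of $\hopf(\zmid)$, but in fact it is a negative multiple; what is actually needed, and what you correctly establish, is that the combination $\hopf(\zmid)+\tfrac{1}{16}\zdif\kk\overline{\zdif}$ is a positive multiple.
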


\begin{proof}
	Consider $\zminus$ and $\zplus$, solution of the classical midpoint method in $\HH^n$ for $Y$.
	The midpoint is denoted by
	\begin{align}
		\zmid \coloneqq \frac{\zminus+\zplus}{2}.
	\end{align}
	Since the classical midpoint method is equivariant with respect to affine transformations, we may, without loss of generality, assume that $\zmid$ is real (i.e., all its components are real).
	At the point $\zmid$, the real direction is orthogonal to the 3--spheres, and $\kk$ is the fibre direction.
	Without loss of generality, we may further assume that $Y(\zmid)$ is proportional to $\ii$, i.e., $Y(\zmid) = \ii \vect a$ and $\vect a\in\RR^n$.
	As a result, the components of $\zminus$ and $\zplus$ belong to $\operatorname{span}\{1,\ii \}$, which we identify with the complex plane $\CC$, so we write $\zminus,\zplus \in \CC^n$.

	Notice that since $\kk \ii = -\ii \kk$,  we have for $z = a + \ii b$:
	\begin{align}
	z \kk = \bar{z} \kk
	\end{align}
	When restricted to $\CC^n$, the Hopf map at $\vect z \in \CC^n$ is thus
	\begin{align}
		\hopf(\z) = \z \kk \bar{\z} = \z^2 \kk
	\end{align}
	This means that $\pi(\CC^n) \subset (\operatorname{span}\{\kk,\jj\})^n$.
	We identify $(\operatorname{span}\{\kk,\jj\})^n$ with $\CC^n$ by $\kk \leftrightarrow 1$ and $\jj \leftrightarrow -\ii$.
	With these identifications, the restriction of $\pi$ to $\CC^{n}$ is the double covering map $\dc$ defined in \eqref{eq:doublecovering}, so the result follows from~\autoref{lma:coveringmidpoint}.
\end{proof}

$\HH*^n$ carries the structure of a symplectic manifold; the Hamiltonian vector field corresponding to $F\in C^\infty(\HH*^n)$ is
\begin{equation}\label{eq:Hamiltonian_vf_quaternions}
	X_F(\z) = \nabla F(\z) k.
\end{equation}
The symplectic structure coincides with the canonical symplectic structure of $T^*\RR^{2n}$ under the identification 
\begin{equation}
	\HH*^n \ni\vect a + \ii\vect b + \jj\vect c + \kk\vect d \longmapsto \big(\underbrace{(\vect b,\vect d)}_{\vect q},\underbrace{(\vect a,\vect c)}_{\vect p}\big)\in T^*\RR^{2n}.	
\end{equation}
Likewise, $\Hb*^n$ carries the structure of a Poisson manifold; the Hamiltonian vector field corresponding to $H\in C^\infty(\Hb*^n)$ is
\begin{equation}\label{eq:Hamiltonian_vf_poisson_quaternions}
	X_H(\vect w) = \Pi\big( \vect w\nabla H(\vect w) \big),
\end{equation}
where $\Pi\colon \HH^n \to \Hb^n$ is the projection $\vect a + \ii\vect b + \jj\vect c + \kk\vect d\mapsto \ii\vect b + \jj\vect c + \kk\vect d$.
Under the identification of $\Hb$ with $\RR^3$, this Poisson structure coincides with the standard Lie--Poisson structure of $\RR^3$, so equation~\eqref{eq:Hamiltonian_vf_poisson_quaternions} is just another way of writing equation~\eqref{eq:Ham_vf}.

We now investigate what the conditions on $X$ and $Y$ in~\autoref{lma:Hopfmidpoint} mean for Hamiltonian vector fields $X=X_H$ and $Y=X_F$.
It follows directly from \autoref{lma:XHconstant} that $X_H$ fulfils condition~1 in \autoref{lma:Hopfmidpoint} if and only if $H$ fulfils the same condition, i.e., $H(\vect\lambda\vect w) = H(\vect w)$ for all $\vect\lambda \in \RR_+^n$ and $\vect w\in \Hb*^n$.
The next result shows that $X_F$ fulfils condition~3 in \autoref{lma:Hopfmidpoint} if and only if $F$ fulfils the same condition.

\begin{lemma}
	\label{lma:constantorthogonal}
	Let $F\in C^{\infty}(\HH*^n)$. 
	Then $X_F$ is orthogonal to the fibres of $\hopf$, i.e., 
	\begin{equation}
		\kk \z\inv X_F(\z) \in \Hb^n		
	\end{equation}
	if and only if 
	\begin{equation}\label{eq:Hconst_on_rays}
		F(\vect\lambda \z) = F(\z),\quad \forall\, \lambda \in \Rp^{n} \text{ and } \z\in \HH*^n.
	\end{equation}
\end{lemma}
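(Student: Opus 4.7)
The plan is to reduce both conditions to the same pointwise constraint on $\nabla F$. Using $X_F(\z) = \nabla F(\z)\kk$ from~\eqref{eq:Hamiltonian_vf_quaternions}, the orthogonality condition becomes $\kk\z\inv\nabla F(\z)\kk\in\Hb^n$ componentwise. First I would record the algebraic fact that for any quaternion $w$, the real part satisfies $\real(\kk w\kk)=\real(\kk\kk w)=-\real(w)$ by cyclicity of the real part of a quaternion product; consequently $\kk w\kk\in\Hb$ if and only if $w\in\Hb$. Applied in each component, this reduces the orthogonality condition to $\z\inv\nabla F(\z)\in\Hb^n$, equivalently --- after multiplying the $i$-th entry by the positive scalar $\abs{\z_i}^2$ --- to $\real(\bar z_i\,\nabla_i F(\z)) = 0$ for every $i=1,\dots,n$.

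Next I would match this constraint with the ray-invariance condition. Equipping $\HH$ with the real inner product $\langle u,v\rangle = \real(\bar u v)$, the chain rule applied to $F(\vect\lambda\z) = F(\z)$, differentiated in the direction $\lambda_i$ at $\vect\lambda = \vect 1$, yields precisely $\real(\bar z_i\,\nabla_i F(\z)) = 0$; this handles the ``only if'' direction. For the converse, since the derivative condition is assumed at every point of $\HH*^n$, evaluating it at the shifted point $(\z_1,\dots,\lambda_i\z_i,\dots,\z_n)$ shows that the map $\lambda_i\mapsto F(\z_1,\dots,\lambda_i\z_i,\dots,\z_n)$ has vanishing derivative on all of $\Rp$ and is therefore constant; iterating over $i=1,\dots,n$ recovers the full $\Rp^n$-invariance~\eqref{eq:Hconst_on_rays}.

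The only non-routine ingredient is the quaternionic identity $\real(\kk w\kk)=-\real(w)$ and the resulting bi-implication $\kk w\kk\in\Hb \iff w\in\Hb$; everything else is the standard correspondence between invariance of $F$ along radial curves and orthogonality of $\nabla F$ to the radial direction.
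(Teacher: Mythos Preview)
Your proof is correct and follows essentially the same route as the paper's: substitute $X_F(\z)=\nabla F(\z)\kk$, use the quaternionic identity to strip the outer $\kk$'s and reduce to $\z\inv\nabla F(\z)\in\Hb^n$, and then read this as $\nabla F$ being orthogonal to the radial direction, which is equivalent to ray-invariance of $F$. The paper's version is terser---it asserts the equivalence $\kk w\kk\in\Hb\iff w\in\Hb$ without the cyclicity argument and does not spell out the two directions of the final equivalence---but your more explicit treatment of both implications and the justification via $\real(\kk w\kk)=-\real(w)$ are welcome additions, not deviations.
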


\begin{proof}
	From~\eqref{eq:Hamiltonian_vf_quaternions} it follows that $X_F(\z)$ is orthogonal to the fibres if and only if
	\begin{align}
		\kk \z\inv\nabla F(\z) \kk \in \Hb^n.
	\end{align}
	This is equivalent to
	\begin{align}
		\z\inv\nabla F(\z) \in \Hb^n
		,
	\end{align}
	which means that $\z\inv\nabla F(\z)$ is pure imaginary, so $\nabla F(\z)$ is tangential to the spheres.
	Since this is true for any $\z\in \HH*^n$, it means that $F(\vect\lambda \z) = F(\z)$ for $\vect \lambda \in \Rp^n$.
\end{proof}

Given $H\in C^\infty(\Hb*^n)$ we can now construct discretizations of systems on $\Hb*^n$ in two ways:
\begin{enumerate}
	\item The classical midpoint method on $\HH*^n$ for the vector field $X_{H\circ\hopf}$ descends to a Lie--Poisson integrator on $\Hb*^n$.
	The resulting discrete-time system, examined in~\cite{McMoVe2014d}, is an example of a \emph{collective symplectic integrator}.

	\item The extended spherical midpoint method~\eqref{eq:area_midpoint_LP} for the vector field $X_H$ gives a discrete-time system on $\Hb*^n$.
\end{enumerate}
In general the two methods are different.
They do, however, coincide for ray-constant Hamiltonian vector fields, which is the main result of this section.

\begin{theorem}\label{thm:midpoint_relatedness}
	Let $H\in C^{\infty}(\Hb*^n)$ be constant on rays, let $\Phi$ denote the classical midpoint method on $\Hb*^n$, and let $\Psi$ denote the classical midpoint methods on~$\HH*^n$.
	Then the extended Hopf map~$\hopf$ intertwines $\Phi(X_H)$ and $\Psi(X_{H\circ \hopf})$.
	That is, the map on $\Hb*^n$ induced by $\Psi(X_{H\circ \hopf})$ coincides with $\Phi(X_H)$.
\end{theorem}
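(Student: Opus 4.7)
The plan is to apply Lemma \ref{lma:Hopfmidpoint} with the choices $X = X_H \in \Xcal(\Hb*^n)$ and $Y = X_{H\circ\hopf} \in \Xcal(\HH*^n)$; the conclusion of the theorem is then exactly the output of that lemma, since $\Phi(X_H)$ and $\Psi(X_{H\circ\hopf})$ are the classical midpoint methods it intertwines. The work therefore reduces to verifying the lemma's four hypotheses.

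Condition~1 is immediate: since $H$ is constant on rays by assumption, Lemma \ref{lma:XHconstant} yields that $X_H$ is constant on rays as well. For Condition~3, I would first observe that the extended Hopf map is componentwise $2$-homogeneous, $\hopf(\vect\lambda \z) = \vect\lambda^{2}\hopf(\z)$ for $\vect\lambda \in \Rp^n$, which is immediate from $\hopf(\z) = \frac14 \z\kk\conj$ because real scalars commute with every quaternion operation. Consequently $H\circ\hopf$ is itself constant on rays, and Lemma \ref{lma:constantorthogonal} then provides fibre-orthogonality of $X_{H\circ\hopf}$, which is exactly Condition~3.

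Condition~4, namely that $\hopf$ intertwines $X_H$ and $X_{H\circ\hopf}$, is the Poisson-map property of the extended Hopf map from the symplectic manifold $\HH*^n$ onto the Lie--Poisson manifold $\Hb*^n$, as established in~\cite{McMoVe2014c}; alternatively it can be checked by a direct componentwise quaternion computation using \eqref{eq:Hamiltonian_vf_quaternions} and \eqref{eq:Hamiltonian_vf_poisson_quaternions}. Once Condition~4 is in hand, Condition~2 follows by pulling back Casimirs: each norm $\abs{\vect w_i}$ is a Casimir of the Lie--Poisson bracket \eqref{eq:LiePoisson_bracket} and therefore preserved by $X_H$; pulling back along $\hopf$ and using that $\abs{\hopf(z_i)} = \abs{z_i}^2/4$ shows that each $\abs{z_i}^2$ is preserved by $X_{H\circ\hopf}$, which is equivalent to $X_{H\circ\hopf}$ being tangent to every factor $3$-sphere in $\HH*^n$.

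The main obstacle is Condition~4, i.e., establishing that $\hopf$ is a Poisson map, since the other three hypotheses become essentially formal consequences of this one combined with ray-constancy of $H$. The underlying geometric picture is that ray-constancy forces the lift $X_{H\circ\hopf}$ to be simultaneously horizontal for the Hopf fibration and tangent to its preimage $3$-spheres, so that the classical midpoint method on $\HH*^n$ respects the fibration and descends cleanly to its counterpart on $\Hb*^n$.
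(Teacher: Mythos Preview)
Your proposal is correct and follows essentially the same route as the paper: both proofs reduce to verifying the four hypotheses of Lemma~\ref{lma:Hopfmidpoint}, invoking Lemma~\ref{lma:XHconstant} for Condition~1, the $2$-homogeneity of $\hopf$ together with Lemma~\ref{lma:constantorthogonal} for Condition~3, and the Poisson-map property of $\hopf$ from~\cite{McMoVe2014c} for Condition~4. Your treatment is in fact slightly more complete than the paper's, which applies Lemma~\ref{lma:Hopfmidpoint} without explicitly checking Condition~2; your Casimir pull-back argument (that each $\abs{\vect w_i}$ is conserved by $X_H$, hence each $\abs{z_i}^2 = 4\abs{\hopf(\z)_i}$ is conserved by $X_{H\circ\hopf}$) cleanly fills that gap.
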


\begin{proof}
	Appealing to \autoref{lma:XHconstant}, the vector field $X_H$ is constant on rays.
	If $H(\vect\lambda \vect w) = H(\vect w)$ for all $\vect\lambda \in\Rp^n$, then $H\circ\hopf$ fulfils the same property, since $\hopf$ is homogeneous, that is:
	\begin{align}
		\hopf(\vect\lambda \z) = \vect\lambda^2 \hopf(\z), \quad \vect\lambda \in\Rp^n
		.
	\end{align}
	We can therefore use \autoref{lma:constantorthogonal} to obtain that $X_{H\circ\hopf}$ is orthogonal to the fibres. 

	The Hopf map is a Poisson map \cite{McMoVe2014c}, so
	\begin{align}
		T_{\z}\hopf\cdot X_{H\circ\hopf}(\z) =  X_{H}(\hopf(\z)), \quad \z \in \HH,
	\end{align}
	which means that $\hopf$ intertwines $X_H$ and $X_{H\circ\hopf}$.
	The result now follows from \autoref{lma:Hopfmidpoint}, since $\Phi(X_H)$ coincides with the classical midpoint method applied to $X_H$.
\end{proof}

\begin{corollary}\label{cor:symplect}
	The spherical midpoint method, defined by~\eqref{eq:short_method}, is symplectic map and the extended spherical midpoint method, defined by~\eqref{eq:area_midpoint_LP}, is a Lie--Poisson integrator.
\end{corollary}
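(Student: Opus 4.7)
The plan is to deduce both assertions from \autoref{thm:midpoint_relatedness}, which expresses the extended spherical midpoint method on $\Hb*^n$ as the descent via the Hopf map $\hopf$ of the classical midpoint method on $\HH*^n \simeq T^*\RR^{2n}$ applied to the collective Hamiltonian $H\circ\hopf$. Since $T^*\RR^{2n}$ is canonically symplectic, the upstairs map $\Psi(X_{H\circ\hopf})$ is symplectic by the standard generating-function argument for the classical midpoint rule. A short Poisson-identity chase using the Poisson-map property of $\hopf$ together with the intertwining $\hopf\circ\Psi = \Phi\circ\hopf$ then shows that the descended map $\Phi$ preserves the Lie--Poisson bracket on $\Hb*^n \simeq (\so(3)^*)^n$.

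To upgrade ``Poisson map'' to ``Lie--Poisson integrator'', I would show preservation of the coadjoint orbits \eqref{eq:coadjoint_orbits}, i.e., conservation of each $|\w_i|$ by $\Phi$. Because the fibres of $\hopf$ are the orbits of right multiplication $\z_i \mapsto \z_i e^{\kk\theta_i}$, the function $H\circ\hopf$ is automatically invariant under this $\Uone^n$-action; since the Hamiltonian generator of right multiplication by $e^{\kk\theta_i}$ is (up to a constant) $|\z_i|^2$, we obtain $\{|\z_i|^2, H\circ\hopf\} = 0$. Hence each $|\z_i|^2$ is a quadratic invariant of $X_{H\circ\hopf}$, and the classical midpoint rule preserves quadratic invariants exactly. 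Since $|\hopf(\z)_i| = |\z_i|^2/4$, this descends to preservation of $|\w_i|$ by $\Phi$.

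Finally, the ordinary spherical midpoint method \eqref{eq:short_method} is the restriction of $\Phi$ to the single coadjoint orbit $(S^2)^n$, on which $\Gamma$ from \eqref{eq:Gamma_projection} degenerates to $\projS$; the restriction of a Poisson map to a symplectic leaf is automatically symplectic with respect to the leaf's induced symplectic form, which here is the product of area forms on $(S^2)^n$. I expect the subtlest step is the coadjoint-orbit preservation: the ray-invariance of $H$ alone does not by itself pin $X_{H\circ\hopf}$ to the 3-spheres in $\HH*^n$; the crucial extra ingredient is that the fibres of $\hopf$ lie tangent to those spheres, which promotes the automatic $\Uone$-fibre invariance of $H\circ\hopf$ into honest quadratic invariants of its Hamiltonian flow.
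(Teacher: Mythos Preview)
Your argument is correct and follows the same route as the paper: invoke \autoref{thm:midpoint_relatedness} to realise the (extended) spherical midpoint method as the $\hopf$-descent of the classical midpoint method for the collective Hamiltonian $H\circ\hopf$, and then use the general theory of collective symplectic integrators. The only difference is that the paper dispatches the second step in one line by citing~\cite{McMoVe2014d}, whereas you unpack its content explicitly (symplecticity of the classical midpoint rule upstairs, Poisson-map descent, and preservation of the quadratic invariants $|\z_i|^2$ giving coadjoint-orbit preservation).
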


\begin{proof}
	Follows from \autoref{thm:midpoint_relatedness} and the result in~\cite{McMoVe2014d} that collective symplectic integrators are symplectic.
\end{proof}

\section{Riemannian and Kähler geometry}\label{sec:riemann_midpoint}

In this section we describe the geometry of the extended spherical midpoint method from the viewpoint of Riemannian and Kähler geometry.
More precisely, we construct a method on~$\Hb*^n$, stemming from a non-Euclidean metric, that coincides with the spherical midpoint method for Hamiltonian functions that are constant on rays.
The relation between these two methods is established through the classical midpoint method on~$\HH*^n$.
Before working this out in \autoref{sub:special_case}, we develop, in~\autoref{sub:rmm}, a general theory of midpoint methods on Riemannian and Kähler manifolds.
This theory further reveals the geometry of the spherical midpoint method, and provides a starting point for generalizations.

\subsection{Riemannian midpoint methods}\label{sub:rmm}

Given a Riemannian manifold $(M,\mathsf{g})$, let $[0,1]\ni t\mapsto \gamma_{\mathsf{g}}(t;\wminus,\wplus)\in M$ denote the geodesic curve between $\wminus$ and $\wplus$.

\begin{definition}\label{def:riemannian_midpoint}
Given a vector field $X \in \Xcal(M)$, the \emph{Riemannian midpoint method} on $M$ is the discrete-time system $\Phi_{\mathsf{g}}(\Delta tX)\colon \wminus\mapsto\wplus$ defined by
\begin{equation}\label{eq:Riemannian_midpoint_method}
	\frac{\ud}{\ud t}\Big|_{t=1/2} \gamma_{\mathsf{g}}(t;\wminus,\wplus) = \Delta t X\big(\gamma_{\mathsf{g}}(1/2,\wminus,\wplus)\big).
\end{equation}
\end{definition}
If $M=\RR^{d}$ and $\mathsf{g}$ is the Euclidean metric, then \eqref{eq:Riemannian_midpoint_method} coincides with the definition of the classical midpoint method~\eqref{eq:implicit_midpoint}.

Riemannian midpoint methods transform naturally under change of coordinates:
\begin{proposition}\label{pro:Riemannian_coordinate_free}
	Let $M$ and $N$ be two diffeomorphic manifolds, let $\psi\colon N\to M$ be a diffeomorphism, and let $\mathsf{g}$ be a Riemannian metric on $M$.
	Then
	\begin{equation}\label{eq:coordinte_free_midpoint}
		\psi^{-1}\circ \Phi_{\mathsf{g}}(\Delta tX)\circ\psi = \Phi_{\psi^{*}\mathsf{g}}(\Delta t\psi^{*}X).
	\end{equation}
\end{proposition}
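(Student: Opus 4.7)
The plan is to reduce everything to the standard naturality of geodesics under isometries. By the very definition of the pullback metric, $\psi \colon (N,\psi^{*}\mathsf{g}) \to (M,\mathsf{g})$ is an isometry, and isometries send geodesics to geodesics while preserving endpoints; so for any $n^{-}, n^{+} \in N$,
\begin{equation}
	\gamma_{\psi^{*}\mathsf{g}}(t; n^{-}, n^{+}) \;=\; \psi^{-1}\!\big(\gamma_{\mathsf{g}}(t;\psi(n^{-}),\psi(n^{+}))\big).
\end{equation}
I would take this as the single geometric input and then chase definitions.

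First I would fix $n^{-}\in N$, set $m^{-} := \psi(n^{-})$, let $m^{+} := \Phi_{\mathsf{g}}(\Delta t X)(m^{-})$, and put $n^{+} := \psi^{-1}(m^{+})$. The goal then becomes the identity $n^{+} = \Phi_{\psi^{*}\mathsf{g}}(\Delta t\, \psi^{*}X)(n^{-})$, i.e., that $(n^{-},n^{+})$ satisfies the defining relation \eqref{eq:Riemannian_midpoint_method} for the metric $\psi^{*}\mathsf{g}$ and vector field $\psi^{*}X$.

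Next I would differentiate the displayed naturality of geodesics at $t=1/2$. The chain rule yields
\begin{equation}
	\tfrac{\ud}{\ud t}\big|_{t=1/2}\gamma_{\psi^{*}\mathsf{g}}(t;n^{-},n^{+}) = T\psi^{-1}\cdot\tfrac{\ud}{\ud t}\big|_{t=1/2}\gamma_{\mathsf{g}}(t;m^{-},m^{+}) = \Delta t\, T\psi^{-1}\cdot X\big(\gamma_{\mathsf{g}}(1/2;m^{-},m^{+})\big),
\end{equation}
using the defining relation for $\Phi_{\mathsf{g}}(\Delta t X)$ at the second equality. Since $\gamma_{\mathsf{g}}(1/2;m^{-},m^{+}) = \psi(\gamma_{\psi^{*}\mathsf{g}}(1/2;n^{-},n^{+}))$ and $(\psi^{*}X)(n) = T\psi^{-1}\cdot X(\psi(n))$ by definition of the pulled-back vector field, the right-hand side equals $\Delta t\,(\psi^{*}X)(\gamma_{\psi^{*}\mathsf{g}}(1/2;n^{-},n^{+}))$. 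This is exactly the defining equation of the Riemannian midpoint method for $(\psi^{*}\mathsf{g}, \psi^{*}X)$, which gives $n^{+} = \Phi_{\psi^{*}\mathsf{g}}(\Delta t \psi^{*}X)(n^{-})$, equivalent to \eqref{eq:coordinte_free_midpoint}.

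There is no serious obstacle: the proposition is essentially a tautological consequence of the fact that the three ingredients defining the Riemannian midpoint method — the Riemannian metric, the geodesic equation, and the vector field — all behave naturally under diffeomorphisms. If one wanted to be careful about uniqueness issues (the midpoint method is only locally well-defined when $\wminus$ and $\wplus$ are sufficiently close), I would just remark that the bijection $\psi$ transports a neighbourhood on which $\Phi_{\mathsf{g}}(\Delta t X)$ is defined to a neighbourhood on which $\Phi_{\psi^{*}\mathsf{g}}(\Delta t \psi^{*}X)$ is defined.
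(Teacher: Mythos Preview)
Your proof is correct and is exactly the argument the paper has in mind: the paper's own proof is a one-line appeal to ``the definition~\eqref{eq:Riemannian_midpoint_method} of $\Phi_{\mathsf{g}}$ and standard change of coordinate formulas in differential geometry'', and you have simply spelled out those change-of-coordinate formulas explicitly (naturality of geodesics under the isometry $\psi\colon(N,\psi^{*}\mathsf{g})\to(M,\mathsf{g})$, together with the definition of the pulled-back vector field).
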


\begin{proof}
	The result follows from the definition~\eqref{eq:Riemannian_midpoint_method} of $\Phi_{\mathsf{g}}$ and standard change of coordinate formulas in differential geometry.
\end{proof}

A consequence of \autoref{pro:Riemannian_coordinate_free} is that Riemannian midpoint methods are equivariant with respect to isometric group actions:

\begin{proposition}\label{pro:metric_equivariance}
	Let $(M,\mathsf{g})$ be a Riemannian manifold, and let $G$ be a Lie group acting isometrically on $M$.
	Then the Riemannian midpoint method $\Phi_{\mathsf{g}}$ is \emph{equivariant} with respect to $G$, i.e.,
	\begin{equation}\label{eq:equivariance_of_Riemann_midpoint}
		\psi_{g^{-1}}\circ \Phi_{\mathsf{g}}(\Delta tX)\circ\psi_{g} = \Phi_{\mathsf{g}}(\Delta t\psi_g^{*}X)
	\end{equation}
	where $\psi_{g}\colon M\to M$ denotes the action map of $g\in G$.
\end{proposition}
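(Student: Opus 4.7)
The plan is to derive the proposition as an immediate consequence of \autoref{pro:Riemannian_coordinate_free} specialized to the diffeomorphisms $\psi_g \colon M \to M$ coming from the group action, together with the defining property of an isometric action.

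First, I would take $N = M$ and $\psi = \psi_g$ in \autoref{pro:Riemannian_coordinate_free}. Since $g \mapsto \psi_g$ is a group action, the inverse of $\psi_g$ as a diffeomorphism is $\psi_{g^{-1}}$, so the left-hand side of \eqref{eq:coordinte_free_midpoint} becomes exactly the left-hand side of \eqref{eq:equivariance_of_Riemann_midpoint}. The right-hand side gives $\Phi_{\psi_g^{*}\mathsf{g}}(\Delta t\,\psi_g^{*}X)$.

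Second, I would invoke the hypothesis that $G$ acts by isometries, which by definition means $\psi_g^{*}\mathsf{g} = \mathsf{g}$ for every $g \in G$. Substituting this into the right-hand side collapses $\Phi_{\psi_g^{*}\mathsf{g}}$ to $\Phi_{\mathsf{g}}$, yielding the claimed identity. There is no real obstacle here: the only thing to check implicitly is that $\Phi_{\mathsf{g}}$ depends on $\mathsf{g}$ only through the geodesic curves $\gamma_{\mathsf{g}}$ and the value of $X$, both of which are manifestly preserved under pullback by an isometry, so the naturality statement of \autoref{pro:Riemannian_coordinate_free} does the real work.
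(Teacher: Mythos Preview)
Your proof is correct and follows exactly the same approach as the paper: invoke \autoref{pro:Riemannian_coordinate_free} with $N=M$ and $\psi=\psi_g$, then use $\psi_g^*\mathsf{g}=\mathsf{g}$ from the isometry hypothesis.
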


\begin{proof}
	The result follows from \autoref{pro:Riemannian_coordinate_free} and $\psi_{g}^*\mathsf{g} = \mathsf{g}$ (the action is isometric).
\end{proof}

We will now discuss a generalised version of \autoref{pro:Riemannian_coordinate_free}, where $M$ and $N$ are no longer diffeomorphic.

Let $\pi\colon N\to M$ be a \emph{submersion} from $N$ to another manifold $M$ ($\pi$ is smooth and its Jacobian matrix $T_{\z}\pi\colon T_{\z} N \to T_{\pi(\z)}M$ is surjective at every $\z\in N$.)
Then~$\pi$ induces a \emph{vertical distribution} $\mathrm{Vert}$ by $\mathrm{Vert}_{\z} = \{\vect v\in T_{\z}N; T_{\z}\pi\cdot \vect v = 0\}$.
By construction, the vertical distribution is integrable, and the fibre through $\z\in N$ is given by $\pi^{-1}(\{\pi(\z) \})$.
If $(N,\mathsf{h})$ is Riemannian, then the orthogonal complement with respect to $\mathsf{h}$ is called the \emph{horizontal distribution} and denoted $\mathrm{Hor}$.
Typically, the horizontal distribution is \emph{not} integrable.
The Riemannian metric $\mathsf{h}$ is called \emph{descending} (with respect to the submersion $\pi$) if there exists a Riemannian metric $\mathsf{g}$ on $M$ such that for all $\z\in N$
\begin{equation}\label{eq:descending_metric}
	\mathsf{h}_{\z}(\vect u,\vect v) = \mathsf{g}_{\pi(\z)}(T_{\z}\pi\cdot \vect u,T_{\z}\pi\cdot \vect v), \quad \forall \vect u,\vect v\in \mathrm{Hor}_{\z}.
\end{equation}
The map $\pi$ between the Riemannian manifolds $(N,\mathsf{h})$ and $(M,\mathsf{g})$ is then called a \emph{Riemannian submersion}.
For details on Riemannian submersions, see~\cite[\S\!~3.5]{Pe2006}.

A vector field $Y\in\Xcal(N)$ is called \emph{horizontal} if $Y(\z)\in \mathrm{Hor}_{\z}$ for all $\z\in N$.
$Y$ is called \emph{descending} if there exists a vector field $X\in \Xcal(M)$ such that $\pi$ intertwines $X$ and $Y$, i.e., $T_{\z}\pi\cdot Y(\z) = X(\pi(\z))$.
The following result, schematically illustrated in \autoref{fig:riemann_midpoint}, is a generalized version of \autoref{pro:metric_equivariance}.

\begin{figure}
	\centering
	\def\svgwidth{0.45\textwidth}
	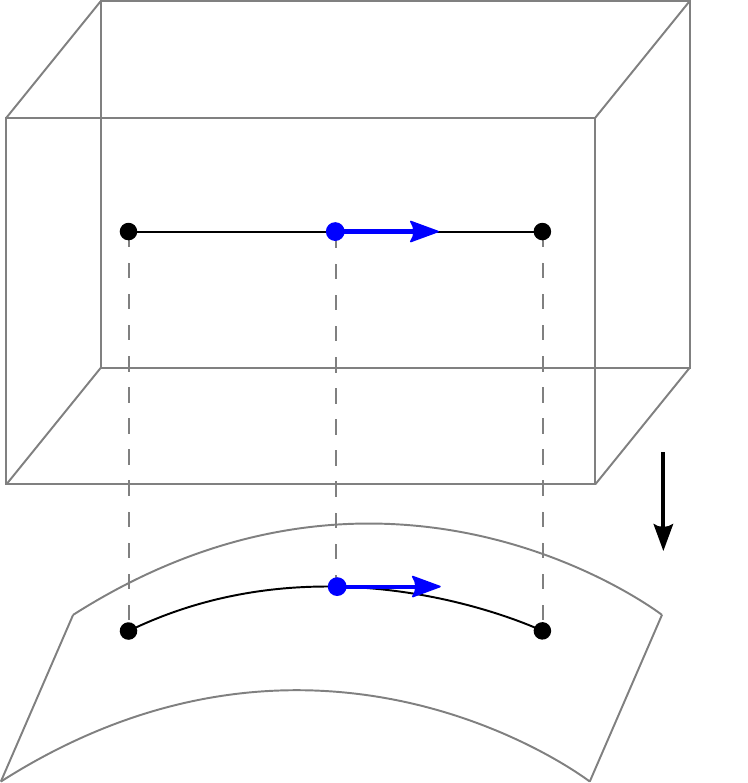
	\caption{
		An illustration of \autoref{pro:Riemannian_midpoint_descend}.
		If the vector fields $Y$ and $X$ are intertwined by the Riemannian submersion~$\pi$ and $Y$ is horizontal, then the Riemannian midpoint methods for~$X$ and~$Y$ are also intertwined by~$\pi$.
	}\label{fig:riemann_midpoint}
\end{figure}

\begin{theorem}\label{pro:Riemannian_midpoint_descend}
	Let $(M,\mathsf{g})$ and $(N,\mathsf{h})$ be Riemannian manifolds and $\pi\colon N \to M$ a Riemannian submersion.
	Let $Y\in\Xcal(N)$ be horizontal, let $X\in \Xcal(M)$, and assume that~$\pi$ intertwines $X$ and $Y$.
	Then $\pi$ intertwines the Riemannian midpoint method $\Phi_{\mathsf{g}}(hX)$ and the Riemannian midpoint method $\Phi_{\mathsf{h}}(\Delta t Y)$, i.e.,
	\begin{equation}\label{eq:midpoint_descend}
		\pi\big(\Phi_{\mathsf{h}}(\Delta tY)(\z)\big) = \Phi_{\mathsf{g}}(\Delta tX)(\pi(\z)).
	\end{equation}
\end{theorem}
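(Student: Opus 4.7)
The plan is to reduce the theorem to two classical facts about Riemannian submersions: (i) a geodesic of $N$ whose velocity is horizontal at one point remains horizontal throughout, and (ii) the projection under $\pi$ of a horizontal geodesic of $(N,\mathsf{h})$ is a geodesic of $(M,\mathsf{g})$, parametrized identically (see, e.g., \cite[\S 3.5]{Pe2006}). Given these two ingredients, the theorem reduces to a single application of the chain rule.

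Concretely, I would fix $\zminus \in N$ and let $\zplus := \Phi_{\mathsf{h}}(\Delta t\, Y)(\zminus)$, so that the connecting geodesic $\gamma(t) := \gamma_{\mathsf{h}}(t;\zminus,\zplus)$ satisfies $\dot\gamma(1/2) = \Delta t\, Y(\zmid)$, where $\zmid := \gamma(1/2)$. Since $Y$ is horizontal by hypothesis, $\dot\gamma(1/2) \in \mathrm{Hor}_{\zmid}$, and by fact~(i) the curve $\gamma$ is horizontal throughout $[0,1]$.

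Next, I would consider the projected curve $c(t) := \pi\bigl(\gamma(t)\bigr)$. By fact~(ii), $c$ is a geodesic of $(M,\mathsf{g})$ joining $\wminus := \pi(\zminus)$ and $\wplus := \pi(\zplus)$; provided $\Delta t$ is small enough for the connecting geodesic to be unique, $c = \gamma_{\mathsf{g}}(\,\cdot\,;\wminus,\wplus)$. Differentiating $c$ at $t = 1/2$ and using the intertwining assumption $T_{\zmid}\pi \cdot Y(\zmid) = X(\pi(\zmid))$ gives
\begin{equation*}
\frac{\ud}{\ud t}\bigg|_{t=1/2} \gamma_{\mathsf{g}}(t;\wminus,\wplus) \;=\; T_{\zmid}\pi \cdot \dot\gamma(1/2) \;=\; \Delta t\, X\bigl(\pi(\zmid)\bigr) \;=\; \Delta t\, X\bigl(c(1/2)\bigr),
\end{equation*}
which is precisely the Riemannian midpoint equation defining $\Phi_{\mathsf{g}}(\Delta t\, X)(\wminus)$. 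Hence $\Phi_{\mathsf{g}}(\Delta t\, X)\bigl(\pi(\zminus)\bigr) = \pi(\zplus) = \pi\bigl(\Phi_{\mathsf{h}}(\Delta t\, Y)(\zminus)\bigr)$, which is exactly \eqref{eq:midpoint_descend}.

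The main obstacle is conceptual rather than computational: one must carefully invoke the two standard but nontrivial properties (i) and (ii) of Riemannian submersions, and restrict the argument to a regime---small $\Delta t$, or a suitably small neighbourhood---in which both $\gamma_{\mathsf{h}}$ and $\gamma_{\mathsf{g}}$ are uniquely determined by their endpoints so that the projection step yields the actual Riemannian midpoint geodesic downstairs. Beyond this geometric setup, the verification is the one-line chain-rule computation above.
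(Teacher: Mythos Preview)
Your proposal is correct and follows essentially the same route as the paper's own proof: both arguments observe that the $N$-geodesic is horizontal at $t=1/2$ because $Y$ is horizontal, invoke the standard facts from \cite[\S3.5]{Pe2006} that a geodesic horizontal at one point stays horizontal and projects to a geodesic on $M$, and then apply $T\pi$ (the chain rule) to the defining equation. Your additional remark about restricting to small $\Delta t$ for uniqueness of the connecting geodesic is a point the paper leaves implicit.
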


\begin{proof}
	Let $\zminus$ and $\zplus$ fulfil~\eqref{eq:Riemannian_midpoint_method}.
	Let $\wminus = \pi(\zminus)$ and $\wplus = \pi(\zplus)$.
	We need to show that $\wplus = \Phi_{\mathsf{g}}(\Delta t X)(\wminus)$.
	The geodesic $\gamma_{\mathsf{h}}(t;\zminus,\zplus)$ is horizontal at $t=1/2$.
	It is therefore horizontal at all times~\cite[\S\!~3.5]{Pe2006}.
	Since horizontal geodesics on $N$ maps to geodesics on~$M$, we have that $\pi(\gamma_\mathsf{h}(t;\zminus,\zplus)) = \gamma_\mathsf{g}(t;\wminus,\wplus)$.
	By applying $T\pi$ to~\eqref{eq:Riemannian_midpoint_method} we obtain
	\begin{equation}
		\begin{split}
		T_{\gamma_{\mathsf{h}}(1/2,\zminus,\zplus)}\pi \cdot \frac{\ud}{\ud t}\Big|_{t=1/2} \gamma_{\mathsf{h}}(t;\zminus,\zplus) &=
			T_{\gamma_{\mathsf{h}}(1/2,\zminus,\zplus)}\pi \cdot \Delta t Y(\gamma_{\mathsf{h}}(1/2;\zminus,\zplus)) 
		\\ & \Rightarrow \\
		\frac{\ud}{\ud t}\Big|_{t=1/2} \pi\big(\gamma_{\mathsf{h}}(t;\zminus,\zplus)\big) &=
			\Delta t T_{\gamma_{\mathsf{h}}(1/2,\zminus,\zplus)}\pi \cdot Y(\gamma_{\mathsf{h}}(1/2;\zminus,\zplus))				
		\\ & \Rightarrow \\
		\frac{\ud}{\ud t}\Big|_{t=1/2} \pi\big(\gamma_{\mathsf{h}}(t;\zminus,\zplus)\big) &=
			\Delta t X\big(\pi(\gamma_{\mathsf{h}}(1/2;\zminus,\zplus))\big)				
		\\ & \Rightarrow \\
		\frac{\ud}{\ud t}\Big|_{t=1/2} \gamma_{\mathsf{g}}(t;\wminus,\wplus) &=
			\Delta t X\big(\gamma_{\mathsf{g}}(1/2;\wminus,\wplus)\big).
		\end{split}
	\end{equation}
	Thus, $\wplus$ fulfils the equation defining $\Phi_{\mathsf{g}}(\Delta t X)$, which proves the result.
\end{proof}

Next, assume $(N,\mathsf{h},\omega)$ is a Kähler manifold.
Recall the properties of a Kähler manifold: there is a map $J\colon TN\to TN$ called the \emph{complex structure} that fulfills
\begin{equation}\label{eq:Jprop}
	\begin{split}
	\mathsf{h}(\vect u,\vect v) &= \omega(\vect u,J\vect v) \\
	\omega(\vect u,\vect v) &= \mathsf{h}(J\vect u,\vect v) \\
	\mathsf{h}(J\vect u,J\vect v) &= \mathsf{h}(\vect u,\vect v) \\
	X_{F} &= J^{-1}\nabla F,\quad F\in C^{\infty}(N)
	\end{split}
\end{equation}
where $\nabla$ is the gradient with respect to $\mathsf{h}$.
We are interested in the case when $\pi$ is both a Riemannian submersion and a Poisson map.

\begin{lemma}\label{lem:horisontal_XH}
	Let $(N,\mathsf{h},\omega)$ be a Kähler manifold,
	let $(M,\mathsf{g},\{\cdot,\cdot\})$ be a Riemannian and Poisson manifold, and
	let $H\in C^\infty(M)$ be a Hamiltonian.
	Assume there is a Riemannian submersion $\pi\colon N\to M$ that is also a Poisson map.
	Then $X_{H\circ \pi}$ is horizontal if and only if $\nabla H$ (gradient of $H$ with respect to the metric on $M$) is tangent to the symplectic leaves of~$M$.
\end{lemma}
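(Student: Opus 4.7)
The plan is to leverage the Kähler identity $X_F = J^{-1}\nabla F$ together with the horizontal-lift invariance of gradients under a Riemannian submersion. Since $\pi$ is Riemannian, $\nabla(H\circ\pi)$ equals the horizontal lift $\overline{\nabla H}$ of $\nabla H$; combined with~\eqref{eq:Jprop} this yields $X_{H\circ\pi} = J^{-1}\overline{\nabla H}$. Pairing with an arbitrary $\vect v \in \mathrm{Vert}_{\z}$ and using that $J$ is $\mathsf{h}$-skew-adjoint (a consequence of $J^2 = -\id$ and $\mathsf{h}(J\cdot, J\cdot) = \mathsf{h}(\cdot,\cdot)$) gives
\begin{equation*}
	\mathsf{h}(X_{H\circ\pi},\vect v) \;=\; \mathsf{h}(\overline{\nabla H},\,J\vect v) \;=\; J\vect v(H\circ\pi) \;=\; \mathsf{g}(\nabla H,\, T\pi\cdot J\vect v).
\end{equation*}
Hence $X_{H\circ\pi}$ is horizontal if and only if $\nabla H$ is $\mathsf{g}$-orthogonal to $T\pi(J\,\mathrm{Vert}_{\z}) \subset T_wM$, where $w = \pi(\z)$.

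The heart of the proof is the identification $T\pi(J\,\mathrm{Vert}_{\z}) = (T_w\mathcal{O})^\perp$, where $\mathcal{O}$ is the symplectic leaf through $w$. For the inclusion $(T_w\mathcal{O})^\perp \subset T\pi(J\,\mathrm{Vert}_{\z})$ I would use Casimirs: for any Casimir $C$ of $M$ near $w$, $X_C = 0$, and because $\pi$ is Poisson, $X_{C\circ\pi}$ is $\pi$-related to $X_C$ and hence vertical. Rewriting the Kähler identity as $\nabla F = JX_F$ places $\overline{\nabla C} = JX_{C\circ\pi}$ in $J\,\mathrm{Vert}_{\z}$, so $\nabla C = T\pi\cdot\overline{\nabla C} \in T\pi(J\,\mathrm{Vert}_{\z})$; since Casimir gradients locally span $(T_w\mathcal{O})^\perp$ at regular points, this inclusion follows. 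The reverse inclusion $T\pi(J\,\mathrm{Vert}_{\z}) \subset (T_w\mathcal{O})^\perp$ --- the main obstacle --- would be established by computing $\mathsf{g}(T\pi\cdot J\vect v,\,X_g(w))$ directly: one uses $X_g = T\pi\cdot X_{g\circ\pi}$, the Riemannian-submersion identity $\mathsf{g}(T\pi A,T\pi B) = \mathsf{h}(A_{\mathrm{Hor}},B_{\mathrm{Hor}})$, and the orthogonality $\mathsf{h}(J\vect v, X_{g\circ\pi}) = \omega(\vect v, X_{g\circ\pi}) = \vect v(g\circ\pi) = 0$ coming from verticality of $\vect v$, together with the fact that $J$ is an $\mathsf{h}$-isometry sending $\mathrm{Vert}_{\z}$ and $\mathrm{Hor}_{\z}$ to mutually orthogonal subspaces $J\,\mathrm{Vert}_{\z}$ and $J\,\mathrm{Hor}_{\z}$ containing $J\vect v$ and $X_{g\circ\pi} = J^{-1}\overline{\nabla g}$, respectively.

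Combining the two inclusions yields $X_{H\circ\pi}$ horizontal if and only if $\nabla H \in \bigl((T_w\mathcal{O})^\perp\bigr)^\perp = T_w\mathcal{O}$, which is the claim. The delicate technical point is the second inclusion in the identification of $T\pi(J\,\mathrm{Vert}_{\z})$, whose verification rests on the Kähler compatibility between $J$ and the vertical-horizontal decomposition arising from the Poisson Riemannian submersion $\pi$.
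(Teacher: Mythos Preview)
Your overall strategy is very close to the paper's: both use the K\"ahler identities to rewrite $\mathsf{h}(X_{H\circ\pi},\vect v)$ as $\omega(\vect v,\nabla(H\circ\pi))$, and both then identify the resulting condition with tangency of $\nabla H$ to the leaves. The paper packages the key step as the single cited fact $\mathrm{Vert}^{\bot_\omega}=\mathrm{Pre}$ (Libermann--Marle), from which the conclusion follows immediately via $T\pi\circ\nabla(H\circ\pi)=\nabla H$. You instead try to prove the equivalent statement $T\pi(J\,\mathrm{Vert}_{\z})=(T_w\mathcal{O})^\perp$ by hand.

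There is a genuine gap in your reverse inclusion. You correctly obtain $\mathsf{h}(J\vect v,X_{g\circ\pi})=0$, but what you need is $\mathsf{g}(T\pi\cdot J\vect v,X_g)=\mathsf{h}\bigl((J\vect v)_{\mathrm{Hor}},(X_{g\circ\pi})_{\mathrm{Hor}}\bigr)=0$. These differ by the term $\mathsf{h}\bigl((J\vect v)_{\mathrm{Vert}},(X_{g\circ\pi})_{\mathrm{Vert}}\bigr)$, and the orthogonality of $J\,\mathrm{Vert}$ and $J\,\mathrm{Hor}$ that you invoke gives you only the first identity, not the second. The missing ingredient is that $\mathrm{Vert}$ is $\omega$-isotropic, equivalently $J\,\mathrm{Vert}\subset\mathrm{Hor}$, which makes $(J\vect v)_{\mathrm{Vert}}=0$ and closes the gap. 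But that isotropy is precisely the nontrivial content of the Libermann--Marle result the paper cites (it says $\mathrm{Vert}\subset\mathrm{Pre}=\mathrm{Vert}^{\bot_\omega}$); it does not follow from $J$ being an isometry alone. Your Casimir argument for the other inclusion does work at regular points, and in fact a dimension count (using $\dim\mathrm{Vert}^{\bot_\omega}=\dim\mathrm{Pre}$, again the cited result) would then give equality without needing your direct argument for the reverse inclusion at all.
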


\begin{proof}
	By definition, the vector field $X_{H\circ\pi}$ is horizontal if and only if
	\begin{equation}\label{eq:XH_horisontal}
		\mathsf{h}(X_{H\circ\pi},\vect v) = 0, \quad \forall\, \vect v\in \mathrm{Vert}.
	\end{equation}
	By \eqref{eq:Jprop} we also have
	\begin{equation}\label{eq:g_omega_transforms}
		\begin{split}
		\mathsf{h}(X_{H\circ\pi},\vect v) &= 	\mathsf{h}(J^{-1}\nabla (H\circ\pi),\vect v) \\
		&= \mathsf{h}(\nabla (H\circ\pi),J\vect v)\\
		&= \omega(\vect v,\nabla (H\circ\pi))	.	
		\end{split}
	\end{equation}
	Combining \eqref{eq:XH_horisontal} and \eqref{eq:g_omega_transforms}, $X_{H\circ\pi}$ is horizontal if and only if
	\begin{equation}\label{eq:dH_omega_complement}
		\omega(\vect v,\nabla (H\circ\pi)) = 0, \quad \forall\, \vect v\in \mathrm{Vert}.
	\end{equation}
	Expressed in words, $X_{H\circ\pi}$ is horizontal if and only if $\nabla (H\circ\pi)$ belongs to the symplectic complement of $\mathrm{Vert}$, denoted $\mathrm{Vert}^{\bot_\omega}$.
	Let $\mathrm{Pre}$ denote the distribution on~$N$ defined by the preimage of the tangent spaces of the symplectic leaves of $M$.
	From \cite[Proposition~III.14.21]{LiMa1987} it follows that $\mathrm{Pre} = \mathrm{Vert}^{\bot_\omega}$.
	It remains to show that $\nabla H$ is tangent to the symplectic leaves if and only if $\nabla (H\circ\pi)\in \mathrm{Pre}$.
	Since the metric $\mathsf{h}$ is descending, the gradients on $M$ and $N$ are related by
	\begin{equation}\label{eq:pi_related_nabla}
		T\pi\circ \nabla (H\circ \pi) = \nabla H.
	\end{equation}
	By the definition of $\mathrm{Pre}$, this formula proves the result.
\end{proof}

By combining \autoref{pro:Riemannian_midpoint_descend} with \autoref{lem:horisontal_XH} we obtain the following result.

\begin{theorem}\label{pro:descending_midpoint_poisson}
	Let $M$, $N$, and $\pi$ be as in \autoref{lem:horisontal_XH}.
	Let $H\in C^{\infty}(M)$ fulfil the condition in \autoref{lem:horisontal_XH}, i.e., $\nabla H$ is tangent to the symplectic leaves.
	Let $\Phi_{\mathsf{g}}$ and $\Phi_{\mathsf{h}}$ denote the Riemannian midpoint methods on $M$ and $N$ respectively.
	Then:
	\begin{enumerate}
		\item If $\Phi_{\mathsf{h}}(X_{H\circ\pi})$ is symplectic, then $\Phi_{\mathsf{g}}(X_{H})$ is a Poisson map.
		\item If $\Phi_{\mathsf{h}}(X_{H\circ\pi})$ preserves the pre-image of the symplectic leaves, then $\Phi_{\mathsf{g}}(X_{H})$ preserves the symplectic leaves.
		\item If $G$ is a Lie group that acts on $M$ and $N$, and $\Phi_{\mathsf{h}}$ and $\pi$ are equivariant with respect to $G$, then $\Phi_{\mathsf{g}}$ is equivariant with respect to $G$.
	\end{enumerate}
\end{theorem}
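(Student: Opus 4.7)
The plan is to reduce all three items to a single intertwining of the two midpoint methods, after which each conclusion descends from $N$ to $M$ by a surjection argument along $\pi$. First I would verify the hypotheses of \autoref{pro:Riemannian_midpoint_descend}: the assumption on $H$ is precisely the condition of \autoref{lem:horisontal_XH}, so $X_{H\circ\pi}\in\Xcal(N)$ is horizontal, and since $\pi$ is a Poisson map it intertwines the Hamiltonian vector fields $X_H$ and $X_{H\circ\pi}$ (Poisson maps send $\pi$-related Hamiltonians to $\pi$-related Hamiltonian vector fields). \autoref{pro:Riemannian_midpoint_descend} then yields the key identity
\[
	\pi\circ\Phi_{\mathsf{h}}(X_{H\circ\pi}) = \Phi_{\mathsf{g}}(X_H)\circ \pi ,
\]
which is the workhorse for all three items.

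For item~1, I would take arbitrary $F_1,F_2\in C^\infty(M)$ and compute
\[
	\{F_1,F_2\}_M\circ \Phi_{\mathsf{g}}(X_H)\circ\pi = \{F_1\circ\pi,F_2\circ\pi\}_N \circ \Phi_{\mathsf{h}}(X_{H\circ\pi}),
\]
using the key identity together with $\pi$ being Poisson. Symplecticity of $\Phi_\mathsf{h}(X_{H\circ\pi})$ implies it is Poisson on $N$, so the bracket moves through, and a second application of the key identity rewrites the result as $\{F_1\circ\Phi_{\mathsf{g}}(X_H),F_2\circ\Phi_{\mathsf{g}}(X_H)\}_M\circ\pi$; surjectivity of the submersion $\pi$ onto its image then cancels to give the Poisson property of $\Phi_{\mathsf{g}}(X_H)$. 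For item~2, each symplectic leaf $L\subset M$ is the $\pi$-image of its saturated preimage $\pi^{-1}(L)$, so preservation of the preimage by $\Phi_\mathsf{h}(X_{H\circ\pi})$ together with the key identity immediately gives $\Phi_\mathsf{g}(X_H)(L)\subset L$. For item~3, with $\psi_g$ and $\tilde\psi_g$ the $G$-actions on $M$ and $N$, the chain
\[
	\Phi_{\mathsf{g}}(X_H)\circ\psi_g\circ\pi = \psi_g\circ \pi \circ \Phi_{\mathsf{h}}(X_{H\circ\pi}) = \psi_g\circ \Phi_{\mathsf{g}}(X_H)\circ\pi,
\]
obtained from equivariance of $\pi$, equivariance of $\Phi_\mathsf{h}$, and the key identity, yields equivariance of $\Phi_{\mathsf{g}}$ after cancelling $\pi$.

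The main obstacle is really only the bookkeeping of cancellation through the surjection $\pi$: one must verify that $\pi$ has sufficient image to legitimately cancel on the right (in the intended application, where $\pi$ is the extended Hopf map onto $\Hb*^n$, this is automatic), and in item~3 one must correctly identify how $X_{H\circ\pi}$ transforms under $\tilde\psi_g^{*}$ so that the equivariance relation of \autoref{pro:metric_equivariance} applies without altering the vector field. This is straightforward under the natural hypothesis that $H$ (and hence $H\circ\pi$, by equivariance of $\pi$) is $G$-invariant; under this interpretation the diagram chases above go through verbatim.
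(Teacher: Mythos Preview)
Your reduction to the intertwining identity via \autoref{lem:horisontal_XH} and \autoref{pro:Riemannian_midpoint_descend} is exactly the paper's opening move, and your arguments for items~1 and~2 are correct and match the paper's (the paper states item~1 in a single sentence; your bracket computation simply unpacks it).

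For item~3, however, you prove a weaker statement than the one claimed. The conclusion ``$\Phi_{\mathsf{g}}$ is equivariant with respect to $G$'' is equivariance of the \emph{method} in the sense of \autoref{pro:metric_equivariance}, namely
\[
\psi_{g^{-1}}\circ\Phi_{\mathsf{g}}(\Delta t X)\circ\psi_g = \Phi_{\mathsf{g}}(\Delta t\,\psi_g^{*}X)\quad\text{for every }X\in\Xcal(M),
\]
not merely commutation of the single map $\Phi_{\mathsf{g}}(X_H)$ with the action. Your chain establishes only the latter, and only after imposing the extra hypothesis that $H$ is $G$-invariant so that $\tilde\psi_g^{*}X_{H\circ\pi}=X_{H\circ\pi}$. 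The paper does not use $H$ at all in item~3: it takes an arbitrary $X\in\Xcal(M)$, lets $Y\in\Xcal(N)$ be its horizontal lift (which exists since $\pi$ is a Riemannian submersion), checks via equivariance of $\pi$ that $g\cdot Y$ descends to $g\cdot X$, and then applies \autoref{pro:Riemannian_midpoint_descend} together with the assumed equivariance of $\Phi_{\mathsf{h}}$ to obtain
\[
\Phi_{\mathsf{g}}(g\cdot X)\circ\pi \;=\; \pi\circ\Phi_{\mathsf{h}}(g\cdot Y) \;=\; \pi\circ g^{-1}\cdot\Phi_{\mathsf{h}}(Y)\cdot g \;=\; g^{-1}\cdot\Phi_{\mathsf{g}}(X)\cdot g\circ\pi .
\]
The key point you are missing is that the intertwining of \autoref{pro:Riemannian_midpoint_descend} is available for \emph{any} horizontal descending $Y$, not just for $X_{H\circ\pi}$; invoking it in that generality removes the need for $G$-invariance of $H$ and yields the full equivariance statement.
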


\begin{proof}
	Let $\Phi_\mathsf{h}$ and $\Phi_{\mathsf{g}}$ be the Riemannian midpoint methods on $M$ and $N$ respectively.
	By \autoref{lem:horisontal_XH}, $X_{H\circ\pi}$ is horizontal, since $\nabla H$ is tangential to the symplectic leaves.
	The vector field $X_{H\circ\pi}$ is thus descending (it descends to $X_H$ since $\pi$ is a Poisson submersion) and horizontal. 
	By \autoref{pro:Riemannian_midpoint_descend}, $\pi$ then intertwines $\Phi_{\mathsf{g}}(X_{H})$ and $\Phi_{\mathsf{h}}(X_{H\circ\pi})$.
	
	Proof of (1):
	Since $\pi$ is a Poisson map and $\Phi_{\mathsf{h}}(X_{H\circ\pi})$ is a symplectic map, $\Phi_{\mathsf{g}}(X_{H})$ is a Poisson map.
	
	Proof of (2):
	Since $\Phi_{\mathsf{h}}(X_{H\circ\pi})$ preserves the integral submanifolds of the symplectic complement of $\mathrm{Vert}$, and since these submanifolds project to the symplectic leaves, it follows from the $\pi$ intertwining property that $\Phi_{\mathsf{g}}(X_{H})$ preserves the symplectic leaves.

	Proof of (3): Let $X\in \Xcal(M)$.
	Let $Y\in \Xcal(N)$ be horizontal and descending to $X$.
	Let $g\in G$
	Then $g\cdot X\circ \pi = g\cdot T\pi\circ X = T\pi g\cdot X$, since $\pi$ is equivariant.
	Thus, $g\cdot Y$ descends to $g\cdot X$.
	Next, using \autoref{pro:Riemannian_midpoint_descend}
	\begin{equation}
		\Phi_{\mathsf{g}}(g\cdot X)\circ\pi = \pi\circ \Phi_{\mathsf{h}}(g\cdot Y) = \pi \circ g^{-1}\cdot \Phi_{\mathsf{h}}(Y)\cdot g = g^{-1}\cdot \Phi_{\mathsf{g}}(X) \cdot g \circ \pi.
	\end{equation}
	This proves the results since $\pi$ is a submersion.
\end{proof}

\subsection{Riemannian structure of the spherical midpoint method} 
\label{sub:special_case}

Our objective is to show that the spherical midpoint method~\eqref{eq:area_midpoint_LP} on $\Hb*^n$, for Hamiltonian vector fields $X_H\in\Xcal(\Hb*^n)$ with $H$ of the form in \autoref{lem:horisontal_XH}, is a Riemannian midpoint method with respect to a non-Euclidean Riemannian metric, related to the classical midpoint method on $\HH*^n$ by a Riemannian submersion in the sense of \autoref{pro:Riemannian_midpoint_descend}.

Recall that the extended Hopf map~\eqref{eq:extended_Hopf_quaternions} is a submersion $\hopf\colon \HH*^n\to \Hb*^n$ that is a Poisson map with respect to the Kähler structure on $\HH*^n$ and the Poisson structure on $\Hb*^n$ (as described in~\autoref{sec:proof_2}).

\begin{lemma}\label{lem:Hopf_decending_metric}
	The Kähler metric on $\HH*^n$ is descending with respect to the extended Hopf map $\hopf\colon \HH*^n\to \Hb*^n$.
	The corresponding Riemannian metric on $\Hb*^n$ is
	\begin{equation}\label{eq:metric_on_N}
		\mathsf{g}_{\w}\big(\vect u,\vect v\big) \coloneqq \sum_{i=1}^{n} \frac{\vect u_i\cdot\vect v_i}{\abs{\w_i}}.
	\end{equation}
\end{lemma}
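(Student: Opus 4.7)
The plan is to use the principal bundle structure of $\hopf$: the fibers are the orbits of the right $\Uone^n$-action on $\HH*^n$ given componentwise by $\z_i\mapsto \z_i e^{\kk\theta_i}$ (this uses $e^{\kk\theta}\kk e^{-\kk\theta} = \kk$). Since $|e^{\kk\theta_i}|=1$, right multiplication by $e^{\kk\theta_i}$ is a Euclidean isometry of $\HH$, so the Kähler (i.e., Euclidean) metric on $\HH*^n$ is $\Uone^n$-invariant. Because an isometric group action whose orbits are the fibers of a submersion automatically gives a descending metric, the existence of $\mathsf{g}$ on $\Hb*^n$ satisfying \eqref{eq:descending_metric} is immediate.

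To identify $\mathsf{g}$ explicitly, it suffices by polarization and componentwise splitting to compute $\mathsf{g}_{\w}(\vect u,\vect u)$ when $n=1$ and $\vect u = T_{\z}\hopf\cdot\vect v$ for $\vect v$ horizontal. Differentiating $\hopf(\z) = \tfrac{1}{4}\z\kk\bar{\z}$ gives
\[
	T_{\z}\hopf\cdot\vect v = \tfrac{1}{4}\bigl(\vect v\kk\bar{\z} + \z\kk\bar{\vect v}\bigr) .
\]
Set $A \coloneqq \vect v\kk\bar{\z}$, so $\bar{A} = -\z\kk\bar{\vect v}$ and $T_{\z}\hopf\cdot\vect v = \tfrac{1}{4}(A-\bar A)$. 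The horizontal condition $\langle\vect v,\z\kk\rangle=0$, using $\langle a,b\rangle = \operatorname{Re}(a\bar b)$, translates to $\operatorname{Re}(A)=0$, i.e., $\bar A = -A$. Thus $T_{\z}\hopf\cdot\vect v = \tfrac{1}{2}A$, and
\[
	|T_{\z}\hopf\cdot\vect v|^2 = \tfrac{1}{4}|A|^2 = \tfrac{1}{4}|\vect v|^2|\z|^2 = |\hopf(\z)|\cdot |\vect v|^2,
\]
using $|\hopf(\z)| = \tfrac{1}{4}|\z|^2$. Since $\mathsf{g}_\w(\vect u,\vect u) = \mathsf{h}_\z(\vect v,\vect v) = |\vect v|^2$ by definition of the descending metric, this yields $\mathsf{g}_\w(\vect u,\vect u) = |\vect u|^2/|\w|$, which is the $n=1$ case of \eqref{eq:metric_on_N}.

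For general $n$, the extended Hopf map acts componentwise and the vertical distribution is the direct sum of the vertical lines in each factor, so the horizontal distribution also splits. The computation above applied to each factor yields the claimed sum. The only subtle step is verifying the quaternionic identity $\operatorname{Re}(A)=0 \iff \vect v \perp \z\kk$; the rest is routine algebra, so no real obstacle is expected.
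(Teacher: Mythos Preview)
Your proof is correct and follows essentially the same approach as the paper: the paper also invokes the isometric $\Uone^n$-action on the fibres (citing an external result for the descending property) and then appeals to ``direct calculations, straightforward but lengthy'' for the explicit form of~$\mathsf{g}$. You in fact go further than the paper by carrying out that calculation cleanly via the quaternionic identity $T_{\z}\hopf\cdot\vect v = \tfrac12 A$ on horizontal vectors, which neatly yields the factor $\abs{\w_i}^{-1}$.
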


\begin{proof}
	Each fibre $\hopf^{-1}(\{ \w \} )\subset \HH*^n$ is the orbit of an action of the group $U(1)^n$ on $\HH*^n$.
	This action is isometric with respect to the Kähler metric.
	That is, if $\mathsf{h}$ denotes the Kähler metric and $L_{\theta}$ denotes the action map, then $L_{\theta}^{*}\mathsf{h} = \mathsf{h}$.
	It follows from \cite[Proposition~4.3]{Mo2015} that $\mathsf{h}$ is descending.
	Direct calculations, straightforward but lengthy, confirm that it descends to the metric~\eqref{eq:metric_on_N}.
\end{proof}

As a specialization of \autoref{pro:Riemannian_midpoint_descend} to the case $M=\Hb*^n$ and $N=\HH*^n$, we obtain a relation between the Riemannian midpoint method on $\Hb*^n$ and $\HH*^n$.
Notice that the Riemannian midpoint method on $\HH*^n$ is the classical midpoint method, since the metric of $\HH*^n$ is Euclidean.
We denote the classical midpoint method on $\HH*^n$ by $\Phi_{\textrm{cm}}$.

\begin{theorem}\label{pro:relation_midpoint_methods_up_down}
	Let $Y\in\Xcal(\HH*^n)$ be a horizontal vector field, let $X\in\Xcal(\Hb*^n)$, and assume that the extended Hopf map $\pi$ intertwines $X$ and $Y$.
	Further, let $\mathsf{g}$ denote the Riemannian metric~\eqref{eq:metric_on_N} on $\Hb*^n$.
	Then $\pi$ intertwines the Riemannian midpoint method $\Phi_{\mathsf{g}}(hX)$ and the classical midpoint method $\Phi_{\textrm{cm}}(hY)$.
\end{theorem}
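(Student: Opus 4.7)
The plan is essentially to apply Theorem~\ref{pro:Riemannian_midpoint_descend} directly, with $M = \Hb*^n$, $N = \HH*^n$, and $\pi$ the extended Hopf map. The preparatory work to make Theorem~\ref{pro:Riemannian_midpoint_descend} applicable is almost entirely done by Lemma~\ref{lem:Hopf_decending_metric}: it tells us that the Kähler metric $\mathsf{h}$ on $\HH*^n$ descends under $\pi$ to precisely the metric $\mathsf{g}$ of~\eqref{eq:metric_on_N}. Hence $\pi\colon (\HH*^n,\mathsf{h})\to(\Hb*^n,\mathsf{g})$ is a Riemannian submersion, which is the geometric structure assumed in Theorem~\ref{pro:Riemannian_midpoint_descend}.

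With the submersion in place, the remaining hypotheses of Theorem~\ref{pro:Riemannian_midpoint_descend} are exactly those granted by the statement we wish to prove: $Y$ is horizontal, and $\pi$ intertwines $X$ and $Y$. Applying the theorem therefore yields
\begin{equation*}
	\pi\circ \Phi_{\mathsf{h}}(\Delta t\, Y) = \Phi_{\mathsf{g}}(\Delta t\, X)\circ \pi,
\end{equation*}
which is the required intertwining.

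It then only remains to identify $\Phi_{\mathsf{h}}$ with the classical midpoint method $\Phi_{\mathrm{cm}}$. Under the identification $\HH \simeq \RR^{4}$, the Kähler metric on $\HH*^n$ is the restriction of the flat Euclidean metric on $\RR^{4n}$ to an open subset; its Levi-Civita geodesics are therefore straight line segments, and the geodesic midpoint is the affine midpoint $(\zminus+\zplus)/2$. Comparing Definition~\ref{def:riemannian_midpoint} with Definition~\ref{def:classical_midpoint}, this means $\Phi_{\mathsf{h}}(\Delta t Y) = \Phi_{\mathrm{cm}}(\Delta t Y)$, which finishes the argument.

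The only point requiring any care is the identification of the Kähler metric on $\HH*^n$ with the flat Euclidean metric; but since $\HH*^n$ is an open subset of $\HH^n \simeq \RR^{4n}$ and the Kähler form arises from the standard real inner product, this is immediate. I do not expect any genuine obstacle: the proof is essentially a one-line invocation of Theorem~\ref{pro:Riemannian_midpoint_descend} combined with Lemma~\ref{lem:Hopf_decending_metric} and the Euclidean nature of the upstairs metric.
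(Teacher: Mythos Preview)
Your proposal is correct and matches the paper's own argument essentially word for word: the paper presents this theorem as a direct specialization of Theorem~\ref{pro:Riemannian_midpoint_descend} (using Lemma~\ref{lem:Hopf_decending_metric} to establish the Riemannian submersion) together with the observation that the Riemannian midpoint method on $\HH*^n$ is the classical midpoint method because the metric is Euclidean.
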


As a specialization of \autoref{lem:horisontal_XH} to the case $M=\Hb*^n$ and $N=\HH*^n$, we obtain a geometric formulation of \autoref{lma:constantorthogonal}.

\begin{lemma}\label{lem:horisontal_XH_R4n}
	Let $H\in C^\infty(\Hb*^n)$.
	Then $X_{H\circ\hopf}$ is horizontal if and only if $H$ is constant on the rays.
	In particular, $X_{H \circ \projS \circ \hopf}$ is horizontal for any $H \in C^\infty((S^2)^n)$.
\end{lemma}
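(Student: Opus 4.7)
The plan is to specialize \autoref{lem:horisontal_XH} to $M=\Hb*^n$ (with its Lie--Poisson structure) and $N=\HH*^n$ (with its Kähler structure), taking $\pi = \hopf$. The two hypotheses of that lemma have already been verified in this paper: by \autoref{lem:Hopf_decending_metric} the Kähler metric on $\HH*^n$ descends under $\hopf$ to the metric $\mathsf{g}$ of \eqref{eq:metric_on_N}, making $\hopf$ a Riemannian submersion, and $\hopf$ is a Poisson map with respect to these structures (as cited at the start of this subsection). Hence the general result applies: $X_{H\circ\hopf}$ is horizontal if and only if $\nabla H$ is tangent to the symplectic leaves of $\Hb*^n$.

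Next I would identify the symplectic leaves explicitly. These are precisely the coadjoint orbits listed in \eqref{eq:coadjoint_orbits}, i.e., products of 2-spheres $S^2_{\lambda_1}\times\cdots\times S^2_{\lambda_n}\subset\Hb*^n$. The gradient $\nabla H$ is to be taken with respect to the non-Euclidean metric $\mathsf{g}$; however, since $\mathsf{g}$ rescales each Euclidean factor only by a strictly positive scalar $1/|\w_i|$, the $\mathsf{g}$-gradient is parallel, component by component, to the Euclidean gradient $(\pd H/\pd\w_1,\ldots,\pd H/\pd\w_n)$. Consequently, $\nabla H$ is tangent to the product of spheres at $\w$ if and only if $(\pd H/\pd\w_i)\cdot\w_i = 0$ for every $i$.

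Finally I would invoke Euler's relation: the pointwise orthogonality $(\pd H/\pd\w_i)\cdot\w_i = 0$ holding on all of $\Hb*^n$ is exactly equivalent to the ray-invariance $H(\vect\lambda\vect w) = H(\vect w)$ for all $\vect\lambda\in\Rp^n$. This settles the first, main statement. The ``in particular'' assertion then follows immediately: for any $H\in C^\infty((S^2)^n)$, the composite $H\circ\projS\in C^\infty(\Hb*^n)$ is ray-constant by the very definition of $\projS$ in \eqref{eq:projection_map_various_r}, so the first part applies to $H\circ\projS$ and yields horizontality of $X_{H\circ\projS\circ\hopf}$.

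The only point that requires care, and therefore the place to be explicit, is that the gradient appearing in \autoref{lem:horisontal_XH} is the $\mathsf{g}$-gradient rather than the Euclidean one; fortunately the conformal nature of $\mathsf{g}$ makes tangency to products of spheres insensitive to this distinction, so the reduction to Euler's relation is direct. Beyond that observation the proof is a straightforward specialization and does not require any further computation.
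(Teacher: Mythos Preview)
Your proposal is correct and follows essentially the same route as the paper: specialize \autoref{lem:horisontal_XH} with $M=\Hb*^n$, $N=\HH*^n$, $\pi=\hopf$, identify the symplectic leaves as the products of spheres~\eqref{eq:coadjoint_orbits}, and then translate the tangency condition on $\nabla H$ into ray-constancy of $H$. The only cosmetic difference is that the paper phrases the last step as ``the $\mathsf{g}$-orthogonal complement of the tangent space to the spheres is the tangent space to the rays, hence $\langle \ud H,\vect v\rangle=0$ for all ray-tangent $\vect v$,'' whereas you first observe that the $\mathsf{g}$-gradient is a componentwise positive rescaling of the Euclidean gradient and then invoke Euler's relation; these are two ways of saying the same thing.
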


\begin{proof}
	The symplectic leaves of $\Hb*^n$ are the coadjoint orbits of $(\so(3)^*)^{n}$.
	These consists of
	\begin{equation}\label{eq:symplectic_leaves_R3n}
		S^2_{r_1}\times\cdots\times S^2_{r_n} = \{ (\w_{1},\ldots,\w_{n})\in \Hb*^n; \abs{\w_k} = r_k \},
	\end{equation}
	for arbitrary $r_k \in \RR^{+}$.
	Let $\nabla$ denote the gradient on $\Hb*^n$ with respect to $\mathsf{g}$.
	It follows from \autoref{lem:horisontal_XH} that $X_{H\circ\pi}$ is horizontal if and only if $\nabla H$ is tangent to \eqref{eq:symplectic_leaves_R3n}.
	From~\eqref{eq:metric_on_N} and the direct product structure of $\Hb*^n$, we see that $\mathsf{g}(\vect u,\vect v) = 0$ for any $\vect u$ tangent to \eqref{eq:symplectic_leaves_R3n} if and only if $\vect v$ is tangent to the rays (\autoref{def:rays}).
	Let $R_{\projS(\w)}\subset \Hb*^n$ denote the ray through $\w$.
	The condition for $X_{H\circ\hopf}$ to be horizontal is therefore
	\begin{equation}\label{eq:cond_XHopi_R4n_horiz}
		\begin{split}
			\mathsf{g}_{\w}(\nabla H(\w),\vect v) &= 0,\quad
			\forall \vect v \in T_{\w}R_{\projS(\w)}
		\\ &\iff 
		\\
			\pair{\ud H(\w)}{\vect v} &= 0,\quad
			\forall \vect v \in T_{\w}R_{\projS(\w)},
		\end{split}
	\end{equation}
	which implies that $H$ must be constant on the rays.
\end{proof}

As a specialization of \autoref{pro:descending_midpoint_poisson} to the case $M=\Hb*^n$ and $N=\HH*^n$, we recover again that the Riemannian midpoint method $\Phi_{\mathsf{g}}$ on $\Hb*^n$ is a Poisson integrator.

\begin{proposition}\label{pro:poisson_strange_midpoint_method}
	The Riemannian midpoint method $\Phi_{\mathsf{g}}$ on $\Hb*^n$, applied to Hamiltonian vector fields, is a Poisson integrator.
\end{proposition}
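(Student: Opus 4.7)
The plan is to derive this as a special case of \autoref{pro:descending_midpoint_poisson} applied to $M=\Hb*^n$ (with Lie--Poisson structure and the Riemannian metric $\mathsf{g}$ of \eqref{eq:metric_on_N}), $N=\HH*^n$ (with its Kähler structure and the Euclidean metric $\mathsf{h}$), and $\pi=\hopf$ the extended Hopf map. Both conclusions of that theorem will enter: part~(1) gives bracket preservation, part~(2) gives preservation of the coadjoint orbits, and together these amount to the Poisson-integrator property.

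First I verify the hypotheses. That $\hopf$ is a Riemannian submersion is \autoref{lem:Hopf_decending_metric}, and $\hopf$ is a Poisson map as recorded in \autoref{sec:proof_2}. Because $\mathsf{h}$ is flat Euclidean, geodesics in $\HH*^n$ are straight lines and the Riemannian midpoint method $\Phi_{\mathsf{h}}$ coincides with the classical midpoint method on $T^*\RR^{2n}\cong \HH*^n$, which is symplectic by standard theory. For the condition on~$H$, the ambient assumption of this subsection---namely that $H$ is of the form in \autoref{lem:horisontal_XH}, equivalently (by \autoref{lem:horisontal_XH_R4n}) that $H$ is constant on rays---supplies the horizontality of $X_{H\circ\hopf}$. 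Part~(1) of \autoref{pro:descending_midpoint_poisson} then yields directly that $\Phi_{\mathsf{g}}(X_H)$ is a Poisson map.

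For preservation of the coadjoint orbits required by part~(2), I would argue that the classical midpoint iteration for $X_{H\circ\hopf}$ preserves the componentwise squared norms $\abs{\z_i}^2$, whose joint level sets are precisely the preimages under $\hopf$ of the coadjoint orbits of $\Hb*^n$. Since $H\circ\hopf$ is $\mathrm{U}(1)^n$-invariant, each $\abs{\z_i}^2$ Poisson-commutes with $H\circ\hopf$, and the classical midpoint method is known to preserve quadratic first integrals exactly. Descending along $\hopf$ then yields that $\Phi_{\mathsf{g}}(X_H)$ preserves the symplectic leaves. The only slightly technical step in this plan is the short quaternionic calculation showing that $\mathrm{U}(1)^n$-invariance of the Hamiltonian implies tangency of $X_{H\circ\hopf}$ to the 3-sphere factors of $\HH*^n$; beyond this, the proof is an unpacking of the machinery already developed.
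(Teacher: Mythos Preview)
Your proposal is correct and follows exactly the route the paper indicates: the proposition is stated as a direct specialization of \autoref{pro:descending_midpoint_poisson} with $M=\Hb*^n$, $N=\HH*^n$, and $\pi=\hopf$, and that is precisely what you do. Your observation that the ray-constant hypothesis on $H$ is an ambient assumption of the subsection (needed so that \autoref{lem:horisontal_XH_R4n} yields horizontality of $X_{H\circ\hopf}$) is the right reading. Your verification of part~(2)---that the classical midpoint method for $X_{H\circ\hopf}$ preserves the quadratic invariants $\abs{\z_i}^2$, whose level sets are the $\hopf$-preimages of the coadjoint orbits---supplies detail the paper leaves implicit; a slightly slicker version of your commutation argument is to note that $\abs{\z_i}^2 = 4\abs{\hopf(\z)_i}$ is the pullback of a Casimir on $\Hb*^n$, hence Poisson-commutes with any collective Hamiltonian $H\circ\hopf$.
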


The final result in this section connects the extended spherical midpoint method~\eqref{eq:area_midpoint_LP} and the Riemannian midpoint method $\Phi_{\mathsf{g}}$.
The two methods are different, but they coincide for ray-constant Hamiltonian vector fields, as a consequence of \autoref{thm:midpoint_relatedness} and \autoref{pro:relation_midpoint_methods_up_down}.

\begin{proposition}\label{pro:relation_strange_spherical}
	Let $\Psi$ denote the extended spherical midpoint method~\eqref{eq:area_midpoint_LP} on $\Hb*^n$ and $\Phi_{\mathsf{g}}$ the Riemannian midpoint method with respect to the metric $\mathsf{g}$ in \eqref{eq:metric_on_N}.
	Let $H\in C^{\infty}(\Hb*^n)$ be constant on rays.
	Then $\Psi(\Delta tX_{H}) = \Phi_{\mathsf{g}}(\Delta t X_{H})$.
\end{proposition}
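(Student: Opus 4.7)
The plan is to show that both $\Psi(\Delta t X_H)$ and $\Phi_{\mathsf{g}}(\Delta t X_H)$ arise, via the extended Hopf map $\hopf$, as descents of the same classical midpoint method on $\HH*^n$ applied to $X_{H\circ\hopf}$. Once this is established, surjectivity of $\hopf$ forces the two maps on $\Hb*^n$ to agree.

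First I would handle the extended spherical midpoint side. The key observation is that when $H$ is ray-constant, \autoref{lma:XHconstant} gives that $X_H$ is also ray-constant, and inspection of the formula~\eqref{eq:Gamma_projection} shows that for each index $i$ the point $\Gamma(\wminus,\wplus)_i$ and the midpoint $(\wminus_i+\wplus_i)/2$ are positive scalar multiples of the same vector $\wminus_i+\wplus_i$. Hence they lie on the same ray, so $X_H(\Gamma(\wminus,\wplus)) = X_H((\wminus+\wplus)/2)$. This collapses the defining equation~\eqref{eq:area_midpoint_LP} of the extended spherical midpoint method to the classical midpoint method on the vector space $\Hb*^n$ applied to $X_H$. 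Invoking \autoref{thm:midpoint_relatedness} then shows that $\hopf$ intertwines $\Psi(\Delta t X_H)$ with the classical midpoint method on $\HH*^n$ applied to $X_{H\circ\hopf}$.

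Next I would handle the Riemannian midpoint side. Since $H$ is ray-constant, \autoref{lem:horisontal_XH_R4n} ensures that $X_{H\circ\hopf}$ is horizontal with respect to the Kähler metric on $\HH*^n$. Because $\hopf$ is a Poisson map, it also intertwines $X_{H\circ\hopf}$ and $X_H$. These are exactly the hypotheses of \autoref{pro:relation_midpoint_methods_up_down}, from which it follows that $\hopf$ intertwines the Riemannian midpoint method $\Phi_{\mathsf{g}}(\Delta t X_H)$ with the classical midpoint method on $\HH*^n$ applied to $X_{H\circ\hopf}$.

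To conclude, I would combine the two intertwining statements: both $\Psi(\Delta t X_H)\circ \hopf$ and $\Phi_{\mathsf{g}}(\Delta t X_H)\circ\hopf$ equal $\hopf\circ \Phi_{\mathrm{cm}}(\Delta t X_{H\circ\hopf})$, and surjectivity of $\hopf$ implies $\Psi(\Delta t X_H) = \Phi_{\mathsf{g}}(\Delta t X_H)$. I do not anticipate any substantive obstacle; the most delicate step is the elementary observation that $\Gamma(\wminus,\wplus)$ and the Euclidean midpoint lie on a common ray, which is what lets the ray-constancy of $X_H$ transfer the extended spherical midpoint method into the ordinary classical midpoint method on $\Hb*^n$ and thereby connect it to the quaternionic picture developed in \autoref{sec:proof_2}.
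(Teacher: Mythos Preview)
Your proposal is correct and follows essentially the same route as the paper, which simply cites \autoref{thm:midpoint_relatedness} and \autoref{pro:relation_midpoint_methods_up_down} and leaves the details implicit. You have made explicit the one step the paper glosses over—that for ray-constant $X_H$ the map $\Gamma$ and the Euclidean midpoint lie on the same ray, so $\Psi(\Delta t X_H)$ coincides with the classical midpoint method on $\Hb*^n$ and hence falls under \autoref{thm:midpoint_relatedness}—and then closed the argument by surjectivity of $\hopf$, exactly as intended.
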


\autoref{pro:poisson_strange_midpoint_method} and \autoref{pro:relation_strange_spherical} provide another independent proof that the spherical midpoint method is a symplectic discretization, this time based on Kähler geometry and \autoref{lem:horisontal_XH}.
This proof is interesting because it allows for generalizations to other Kähler manifolds.

It is a remarkable consequence of \autoref{pro:relation_strange_spherical} that the non-Euclidean induced metric $\mathsf{g}$, given by~\eqref{eq:metric_on_N}, has become redundant in the case when the Hamiltonian $H$ is constant on rays; 
from~\eqref{eq:classical_spherical_relation} we see that the Riemannian midpoint method $\Phi_{\mathsf{g}}(\Delta tX_{H})$ can be expressed solely in terms of the classical midpoint method on $\Hb*^n$ for such Hamiltonians.

\section*{Acknowledgments}

The research was supported by the J~C~Kempe Memorial Fund, the Swedish Foundation for Strategic Research (ICA12-0052), EU Horizon 2020 Marie Sklodowska-Curie Individual Fellowship (661482), and the Marsden Fund of the Royal Society of New Zealand.


\IfFileExists{./amsplainnat.bst}{
\bibliographystyle{amsplainnat}
}{
\bibliographystyle{../amsplainnat}
}
\IfFileExists{./collective.bib}{
\bibliography{collective}
}{
\bibliography{../collective}
}

\end{document}